\let\ov\overline
\def\getbi{\mathsf{get}_\mathsf{B_I}}
\def\updbs{\mathsf{upd}_\mathsf{B_S}}
\def\wr{\textsf{write}}
\def\query{\mathsf{query}}
\def\q{\mathsf{q}}
\def\resp{\mathsf{resp}}
\def\r{\mathsf{r}}
\def\valid{\mathsf{valid}}
\def\push{\mathsf{push}}
\def\pop{\mathsf{pop}}
\def\MP{\mathrm{M}\mm}
\def\EP{\mathrm{E}\mm}
\def\BA{\mathrm{BA}}
\def\NBA{\mathrm{NBA}}
\def\DBA{\mathrm{DBA}}
\def\NA{\mathrm{NA}}
\def\DA{\mathrm{DA}}
\def\BP{\mathrm{B_{\PROT}}}
\def\NBP{\mathrm{N}\BP}
\def\DBP{\mathrm{D}\BP}
\def\NBPA{\NBP\mathrm{A}}
\def\DBPA{\DBP\mathrm{A}}
\def\BPA{\BP\mathrm{A}}
\def\sq{\ensuremath{\mathrm {sq}}}
\def\test{\mathsf{test}}
 \def\ins{\mathsf{in}}
 \def\out{\mathsf{out}}
\def\logTM{{\log}\mm\mathrm{TM}}
	\def\ders{ \vdash^{\!\!\! {}^{ *}}}
	\def\der{ \vdash }
\let\lex\prec
\let\lexeq\preceq
\def\poly{\mathop{\mathrm{poly}}\nolimits}  
\def\leGen#1#2{\mathop{\leq^{\mathrm{#2}}_{\mathrm{#1}}}}
\def\lelog{\leGen{log}{}}
\def\lerat{\mathop{\leq_{\mathrm{rat}}}}
\def\simrat{\mathop{\sim_{\mathrm{rat}}}}
\def\leT{\mathop{\leq_{\mathrm{T}}}}
\def\leTp{\leGen{T}{P}}
\def\Rt{\mathrm{Right}}
\def\Lft{\mathrm{Left}}
\def\reg{\mathrm{DRR}}
\def\nreg{\mathrm{NRR}}
\def\run#1{\xhookrightarrow{#1}}
\def\Per{\mathrm{Per}}
\let\es\varnothing
\def\Pref{\mathrm{Pref}}
\def\CC{\mathbb C}
\def\SS{\mathbb S}
\def\ZZ{\mathbb Z}
\let\epsilon\varepsilon
\let\eps\varepsilon
\let\rendmarker\vartriangleleft
\def\A{{\cal A}}
\def\B{{\cal B}}
\def\CC{{\cal C}}
\def\T{{\cal T}}
\def\0{\mathsf{false}}
\def\1{\mathsf{true}}
\def\PP{{\mathbf{P}}}
\def\NP{{\mathbf{NP}}}
\def\CFL{{\mathsf{CFL}}}
\def\SAPROT{{\text{\sf{SA\text{-}PROT}}}}
\def\SPkPROT{{\mathsf{S_{1,k}PROT}}}
\def\DPROT{{\mathsf{D_2}\text{-}\mathsf{PROT}}}
\def\PROT{\mathsf{P}}
\def\PSPACE{{\mathbf{PSPACE}}}
\def\CFL{{\mathsf{CFL}}}
\def\C{\mathbf{C}}
\def\CL{{\mathscr L}}
\def\CM{{\mathscr M}}
\def\CF{{\mathscr F}}
\newcommand{\Rnum}[1]{\expandafter{\romannumeral #1\relax}}
\newcommand{\RNum}[1]{\uppercase\expandafter{\romannumeral #1\relax}}
\setlist[enumerate]{before=\setupmodenumerate}
\newif\ifmoditem
\newcommand{\setupmodenumerate}{%
  \global\moditemfalse
  \let\origmakelabel\makelabel
  \def\moditem##1{\global\moditemtrue\def\mesymbol{##1}\item}%
  \def\makelabel##1{%
    \origmakelabel{\ifmoditem\llap{\mesymbol\enspace}\fi##1}%
    \global\moditemfalse}%
}
\def\mm{\text{-}}
\def\zzlabel#1{\ifmeasuring@\else\ltx@label{#1}\fi}
\let\lendmarker\vartriangleright
\let\rendmarker\vartriangleleft
\def\aux{\mathsf{aux}}  
\def\mem{\mathsf{mem}}
\theoremstyle{plain}
\newtheorem{theorem}{Theorem}
\newtheorem{lemma}[theorem]{Lemma}
\newtheorem{proposition}[theorem]{Proposition}
\theoremstyle{definition}
\newtheorem{definition}[theorem]{Definition}
\newtheorem{remark}[theorem]{Remark}
\newtheorem{example}[theorem]{Example}
\let\definitionbodyfont\relax
\let\remarkbodyfont\relax
\let\examplebodyfont\relax
\title{Automata Equipped with Auxiliary Data Structures and Regular Realizability Problems}
\author{Alexander Rubtsov\thanks{Faculty of Computer Science, National Research University Higher School of Economics,
 Pokrovsky boulevard 11,   Moscow, 109028,  Russia,
 \texttt{rubtsov99@gmail.com}} \and
  Mikhail Vyalyi\thanks{Faculty of Computer Science, National Research University Higher School of Economics,
 Pokrovsky boulevard 11,   Moscow, 109028,  Russia,
 \texttt{vyalyi@gmail.com}
  }
}
\date{\today}
\begin{document}

\maketitle

\begin{abstract}
We consider general computational models: one-way and two-way finite automata, and logarithmic space Turing machines, all equipped with an auxiliary data structure (ADS). The definition of an ADS is based on the language of protocols of work with the ADS.
We describe the connection of automata-based models with ``Balloon automata'' that are another general formalization of automata equipped with an ADS
presented by Hopcroft and Ullman in 1967.
 This definition establishes the connection between the non-emptiness problem for one-way automata with ADS, languages recognizable by nondeterministic log-space Turing machines equipped with the same ADS, and a regular realizability problem (NRR) for the language of ADS' protocols. The NRR problem is to verify whether the regular language on the input has a non-empty intersection with the language of protocols. The computational complexity of these problems (and languages) is the same up to log-space reductions.

\textbf{Keywords:} Finite automata; Balloon automata; Auxiliary data structures
\end{abstract}

\section{Introduction}\label{Intro}

Many computational models  are derived from (one-way) finite automata (FAs) via equipping them with an auxiliary data structure (ADS). The best-known model of this kind is pushdown automata (PDAs), the deterministic version of which is widely used in compilers. Other examples are $k$-counter automata, $(k,r)$-reversal-bounded counter automata (equipped with $k$ counters each of which can switch between increasing and decreasing modes at most $r$ times), stack automata, nested stack automata, bag automata~\cite{BagAut}, set automata (SAs)~\cite{KutribSApaper2016} and their another variant~\cite{Lange96setautomata}; more examples can be found in~\cite{BalloonHU67}.

During the investigation of balloon automata (BAs)~\cite{BalloonHU67}, Hopcroft and Ullman connected the decidability of the
membership and the emptiness problems for one-way and two-way models; we denote them as $M\mm xyBA$ and $E\mm xyBA$
respectively, where $x=1$ denotes one-way and $x=2$ denotes two-way models, and $y \in \{D, N\}$ stands for determinism or nondeterminism respectively. 
Eq.~\eqref{eq:HUresults} summarizes results on decidability questions from~\cite{BalloonHU67}, where $\leT$ is a \emph{Turing-reduction} and $\{A, B\}$ means that $A \leT B$ and $B\leT A$. 
\begin{align}\label{eq:HUresults}
	\begin{split}
		\{\MP1\DBA, \MP2\DBA\} & \leT \{\EP1\DBA,\EP1\NBA, \MP1\NBA, \MP2\NBA\} \leT \\
							   & \leT \EP2\DBA \leT \EP2\NBA.
	\end{split}
\end{align}
We remark that the relation $\EP1\NBA\leT\EP1\DBA$ was proved for the case of at least a two-letter input alphabet.

While a lot of  models can be described as BA, it is hard to invent such a model with good computational properties. One of the reasons is that the equipment of finite automata 
with a complex data structure (or with several simple data structures) often leads to a universal computational model.
For example, FAs equipped with two pushdown stores are equivalent to Turing machines (TMs), as well as FAs equipped with two non-restricted counters. 

In this paper, we investigate the computational power of FAs equipped with an ADS.
We describe the model using the language of correct protocols of work with the ADS.
 We provide a general approach to analyze the complexity of the
 emptiness problem and prove the following non-trivial result. If FAs are equipped with an ADS and  nondeterministic logarithmic space TMs ($\logTM$s, see the definition
 in~\cite{Sipser}) are equipped with the same ADS, then the FAs' non-emptines problem and the TMs-recognizable languages are of the same complexity (up to log-space reductions). Our key tool is
 the regular realizability problem
 (see Definition~\ref{def:RR} below).

\subsection{Our Contribution}\label{sec:results}
BAs were initially defined as automata with access to additional 
storage of unspecified structure---\emph{the balloon}. A rather general axioms were imposed
for the balloon and the interaction of the balloon and the automaton
(see Definition~\ref{def::BA} below).
In this paper, we propose another definition based on a language of the ADS' protocols that we denote as~$\PROT$, so we refer to the ADS as $\BP$.
We prove that languages recognizable by $1\NBPA$ form not just a rational cone as in the case of $1\NBA$~\cite{BalloonHU67}, but a principal rational cone generated by $\PROT$ (we provide the definition in Section~\ref{sect:RatTransd}).

 This reformulation guarantees good structural properties, some of them follow from the connection with BA (Section~\ref{sect:BA}), and provides the relation between $\EP1\NBPA$ and the \emph{nondeterministic regular realizability problem}.

\begin{definition}\label{def:RR}\definitionbodyfont 
	Fix a formal language $F$ called a \emph{filter}, the parameter of \emph{regular realizability problems} $\reg(F)$ and $\nreg(F)$ that are the problems
	of verifying non-emptiness of the intersection of the filter $F$
with a regular language $L(\A)$ described via the DFA or NFA~$\A$  respectively.
Formally,
\[
\begin{aligned}
  &\nreg(F) = \{ \A \mid \A \text{ is an NFA and } L(\A)\cap F \neq \es  \},\\
  &\reg(F) = \{ \A \mid \A \text{ is a DFA and } L(\A)\cap F \neq \es  \}.
 \end{aligned}
\]
\end{definition}

RR problems have independently been studied under the name regular intersection emptiness problems~\cite{WolfDCFS19,WolfFernauIntregCoRR20}. A restricted version of RR problem (for context-free filters only) is a well-known CFL-reachability problem, which is related to problems in interprocedural program analysis~\cite{Chistikov22, DOLEV198257, Yannakakis90, melski2000interconvertibility, Reps95, bouajjani1997reachability}.

In this paper we focus on the computational complexity, so we use the weakest reduction suitable for our needs,
the deterministic log-space reduction that we denote as $\lelog$. If $A \lelog B$ and $B \lelog A$ we write $A \sim_{\log} B$ and say that $A$ and $B$ are \emph{log-space equivalent}. Note that in our constructions, emptiness and membership problems are the sets of instances' descriptions with positive answers, i.e., $\EP xy\BPA = \{ \langle M \rangle \mid L(M) = \es \}$, $\MP xy\BPA = \{ \langle M, w\rangle \mid w \in L(M) \}$, where $M$ is a $xy\BPA$ and $\langle x \rangle$ is the description of $x$. So, $\ov{\EP xy\BPA} = \{ \langle M\rangle \mid L(M) \neq \es \}$.  We prove that
	 $\ov{\EP1\NBPA} \sim_{\log} \nreg(\PROT)  $.
Based on this result, we establish computational universality of $\ov{\EP1\NBPA}  $ (see Theorem~\ref{th:BP-universality} below). Note that in the universality result we need Turing reductions in polynomial time instead of log-space reductions.

We equip with ADS not only FAs but also $\logTM$s. 
We denote deterministic and nondeterministic $\logTM$s equipped with an ADS~$\BP$ as $\DBP\logTM$ and $\NBP\logTM$ 
respectively. We prove that
\begin{equation}\label{eq:NRRlogTMresult}
 \nreg(\PROT) \sim_{\log} \CL(\NBP\logTM) = \{ L \mid L \lelog \nreg(\PROT) \},
\end{equation}
hereinafter $\CL(\text{model})$ is the class of languages recognizable by the model. 
If $P$ is a problem (formal language) and $S$ is a set of problems (class of formal languages) the reductions mean as follows. $ P \leq S$ means that $\exists P' \in S : P \leq P'$ 
and $ S \leq P$ means that $\forall P' \in S : P' \leq P$; $S\sim P$ means $(P \leq S)\land(S \leq P)$.

It is easy to verify that in the original proofs in~\cite{BalloonHU67}, Turing reductions in~\eqref{eq:HUresults} 
can be replaced by the log-space reductions provided we replace the emptiness problems with non-emptiness ones. So, we obtain
\begin{align}\label{eq:HUresultsLog}
	\begin{split}
		& \{\MP1\DBPA, \MP2\DBPA\} \lelog   \{\ov{\EP1\DBPA},\ov{\EP1\NBPA}, \MP1\NBPA, \MP2\NBPA, \\
		&  \nreg(\PROT), \CL(\NBP\logTM)\} \lelog 
		 \ov{\EP2\DBA} \lelog \ov{\EP2\NBA} \lelog \ov{\EP\NBP\logTM}.
	\end{split}
\end{align}
We also prove the reduction 
\begin{equation}
	\MP1\DBPA \lelog \reg(\PROT).
\end{equation}
 
Results~\eqref{eq:HUresultsLog} combined with  known facts imply assertions~(\ref{fact:P}-\ref{fact:NP}), where S is the set data structure as in SA, S$_1$ is the set data structure that supports the insertion of at most one word, that cannot be removed further but can be tested if a query-word in the set.
In S$_{1, |\Gamma| = 1}$ the word in the set is over an unary alphabet, $\PSPACE\textbf{\mm c}$ and $\NP\textbf{\mm c}$ are subclasses of complete languages.
\begin{align}
		\label{fact:P} 		&\PP = \CL(\mathrm{NPD}\logTM),\text{ where PD is Pushdown store},\\
		&\PSPACE \supseteq  \CL(\mathrm{NS}\logTM), \exists L \in \CL(\mathrm{NS}\logTM) : L \in \PSPACE\textbf{\mm c} \label{fact:PSPACEI},\\
		&\PSPACE  \supseteq \CL(\mathrm{NS}_{1}\logTM), \exists L \in \CL(\mathrm{NS}_{1}\logTM) : L \in \PSPACE\textbf{\mm c},  \label{fact:PSPACEII}\\ 
	&\NP \supseteq \CL(\mathrm{NS}_{1, |\Gamma| = 1}\logTM), , \exists L \in \CL(\mathrm{NS}_{1, |\Gamma| = 1}\logTM) : L \in \NP\textbf{\mm c}. \label{fact:NP} 	
\end{align}

Assertion~\eqref{fact:P} is a well-known fact. Our technique here just shows a new connection: \eqref{fact:P} directly follows from the fact that the emptiness problem for PDA is $\PP$-complete. Assertions (\ref{fact:PSPACEI}-\ref{fact:NP}) are new results to the best of our knowledge, we prove them in Section~\ref{sect:Applications}.
Assertions~(\ref{fact:PSPACEII}-\ref{fact:NP}) lead to~\eqref{eq:HUresultsLog} for the corresponding classes of automata.
For~\eqref{fact:PSPACEI}, we have already obtained the result  in~\cite{RVCSR2018} in the same way and present in this paper the generalized technique.

\section{Definitions}\label{sect:Defs}

\subsection{Notation on binary relations}

We associate with a binary relation $R\subseteq A\times B$ the corresponding mappings
$A \to 2^{B} $ and $2^{A} \to 2^B$ that are denoted by the same letter $R$, so $R(a) = \{ b : aRb\}$ and $R(S) = \cup_{a\in S} R(a)$. A relation $R$ is the \emph{composition} of the relations $P \subseteq A \times C$ and $Q \subseteq C \times B$ if $ R = \{ (a, b) \mid \exists c : aPc \land cQb \} $; we denote the composition as $Q\circ P$. In the case of a set $S \subseteq C$ we treat $S$ as a binary relation $S \subseteq  C\times \{0,1\}$ in the composition $S \circ P = S'$ that returns the set~$S'\subseteq A$. We denote the reflexive and transitive closure of $R \subseteq A\times A$ by $R^*$; the symbol $*$ can also be placed above the relation, e.g., $\ders$. We denote by $R^{-1} \subseteq B \times A$ the inverse relation, i.e., $aRb \iff bR^{-1}a$.

\subsection{Rational Transductions}\label{sect:RatTransd}

Our technique is based on the connection of NRR problems with rational cones. We recall the definitions borrowing them from the book~\cite{BerstelBook}.
	A \emph{finite state transducer} (FST) is a nondeterministic finite automaton with 
        an output tape, and DFST is the deterministic version of FST. For the deterministic version, it is important that a transducer can write a word (but not only a single symbol) on the output tape on processing a letter from the input tape.
        Let $T$ be an FST; we also denote by $T$ the corresponding relation, i.e., $uTv$
        if  there exists a run of $T$ on the input $u$ from the initial state to a final state such that at the end of the run the word~$v$ is written on the output tape.
The \emph{rational dominance} relation $A \lerat B$ holds
if there exists an FST $T$ such that $A = T(B)$, here $A$ and $B$ are languages. The relations computable by FSTs are known as \emph{rational relations}.
The following lemmata are algorithmic versions of well-known facts (see \cite{BerstelBook}, Chapter III), the first one is the algorithmic version of the Elgot-Mezei theorem. The log-space algorithms follow from straight-forward constructions.

\begin{lemma}\label{lemma:EM}
	For FSTs $T_1$ and $T_2$ such that $T_1 \subseteq \Sigma^*\times\Delta^*$, $T_2 \subseteq \Delta^*\times\Gamma^*$, and FA $\A$ such that $L(\A) \subseteq \Delta^*$, there exists an FST $T$ such that $T = T_2 \circ T_1 \subseteq \Sigma^*\times\Gamma^*$, and NFA $\B$
	recognizing the language $T_1^{-1}L(\A)$. So, the relation $\lerat$ is transitive. Moreover, $T$ and $\B$ are constructible in logarithmic space. We denote FST~$T$ and NFA~$\B$ as $T_2\circ T_1 $ and $\A\circ T_1$ respectively.
\end{lemma}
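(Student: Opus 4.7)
The plan is to handle both constructions uniformly by a product construction, preceded by a light normalization of the given machines. First I normalize $T_1$, $T_2$, and $\A$ so that every transition reads at most one symbol and writes at most one symbol, with $\epsilon$ allowed on either side: a transition $q \xrightarrow{a/w} q'$ with $|a|+|w|>2$ is replaced by a chain of $|a|+|w|-1$ fresh intermediate states that first consume the input letters one by one (emitting $\epsilon$) and then emit the output letters one by one (reading $\epsilon$). The fresh states can be indexed by triples (transition id, input offset, output offset), so they fit in $O(\log n)$ bits and do not need to be materialised explicitly.

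For the composed FST $T = T_2 \circ T_1$ I take state set $Q_1 \times Q_2$, initial state $(q_1^0, q_2^0)$, and final set $F_1 \times F_2$, with three kinds of transitions: (i) a pure $T_1$-step $(q_1,q_2)\xrightarrow{a/\epsilon}(q_1',q_2)$ whenever $T_1$ has $q_1 \xrightarrow{a/\epsilon} q_1'$ with $a \in \Sigma \cup \{\epsilon\}$; (ii) a pure $T_2$-step $(q_1,q_2)\xrightarrow{\epsilon/c}(q_1,q_2')$ whenever $T_2$ has $q_2 \xrightarrow{\epsilon/c} q_2'$ with $c \in \Gamma \cup \{\epsilon\}$; (iii) a synchronisation $(q_1,q_2)\xrightarrow{a/c}(q_1',q_2')$ whenever $T_1$ has $q_1 \xrightarrow{a/b} q_1'$ and $T_2$ has $q_2 \xrightarrow{b/c} q_2'$ for a common $b \in \Delta$. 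Along any accepting $T$-run, the sequence of letters $b$ matched in rule (iii) is precisely the intermediate word $w \in \Delta^*$ that witnesses $u(T_2\circ T_1)v$, and conversely any pair of $T_1$- and $T_2$-runs on $(u,w)$ and $(w,v)$ can be interleaved into a $T$-run. The NFA $\B$ for $T_1^{-1}L(\A)$ is built by the same product, now over $Q_1 \times Q_{\A}$, using the analogues of rules (i) and (iii) with $\A$-transitions (which have no output) in place of $T_2$-transitions, together with pure $\A$-steps for any $\epsilon$-transitions of $\A$.

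For the log-space bound, each output state is a pair of short indices into the input encodings, and enumerating transitions amounts to iterating once over the transitions of $T_1$ paired with lookups into $T_2$ (resp.\ $\A$); only a constant number of $O(\log n)$-sized pointers plus a single flag marking which phase of a split chain is currently being simulated are required, so a deterministic log-space transducer suffices. Transitivity of $\lerat$ is then immediate: if $A = T_2(B)$ and $B = T_1(C)$, then $A = (T_2 \circ T_1)(C)$, and $T_2 \circ T_1$ is rational by the FST $T$ above. The only place that needs a bit of care is making the normalisation bookkeeping local enough that one never materialises the entire normalised machine; once that is arranged, the rest of the argument is a routine product construction.
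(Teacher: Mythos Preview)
Your argument is correct and is precisely the standard product construction underlying the Elgot--Mezei theorem; the normalization to single-letter transitions and the log-space bookkeeping are handled properly. The paper does not give a proof of this lemma at all: it simply cites it as a well-known fact from Berstel's book and remarks that ``the log-space algorithms follow from straight-forward constructions,'' so your write-up is in fact supplying the details the paper omits rather than diverging from any argument it makes.
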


\begin{lemma}\label{lemma:invT}
	For each FST $T$ there exists an FST $T^{-1}$ that computes the inverse relation of the relation~$T$. FST $T^{-1}$ is log-space constructible by FST~$T$.
\end{lemma}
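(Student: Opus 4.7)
The plan is to reduce the lemma to a trivial syntactic operation on a suitably normalized FST. A general FST transition has the form $(q, a, v, q')$ where $a \in \Sigma \cup \{\eps\}$ and $v \in \Delta^*$, so simply swapping the second and third components does not in general produce a valid FST (the ``input'' of the result would read a whole word in one step). The fix is to first put $T$ into a \emph{letter normal form} in which every transition reads at most one input letter and writes at most one output letter, and only then swap the two label components.

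First I would describe the normalization. For each transition $\tau = (q, a, b_1 b_2 \cdots b_k, q')$ of $T$ with $k\ge 2$, introduce fresh intermediate states $q^\tau_1, \ldots, q^\tau_{k-1}$ and replace $\tau$ by the chain
\[
(q, a, b_1, q^\tau_1),\; (q^\tau_1, \eps, b_2, q^\tau_2),\; \ldots,\; (q^\tau_{k-1}, \eps, b_k, q').
\]
Transitions with $k=0$ or $k=1$ are kept as is. A straightforward induction on the length of a computation shows that the resulting FST $T'$ realizes the same relation as $T$: any accepting run of $T$ factors uniquely through the chains, and conversely a run of $T'$ visiting an intermediate state $q^\tau_i$ can only continue along the chain for $\tau$ and must complete it before reading the next input letter or accepting, since these intermediate states have no other outgoing transitions and are non-final.

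Second, I would define $T^{-1}$ as the FST obtained from $T'$ by replacing every transition $(p, x, y, p')$ with $x,y \in \Sigma\cup\Delta\cup\{\eps\}$ by $(p, y, x, p')$, keeping the same initial and final states. An accepting run of $T$ on input $u$ producing $v$ corresponds bijectively to an accepting run of $T'$ with the same trace of $(x,y)$ labels, which after swapping is an accepting run of $T^{-1}$ on input $v$ producing $u$; hence $uTv \iff vT^{-1}u$, as required.

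Finally, I would verify log-space constructibility, which is the only slightly delicate point but still routine. The states of $T^{-1}$ are the original states of $T$ together with the intermediate states $q^\tau_i$, each encodable as a pair (transition index of $T$, position in the output word), which takes $O(\log |T|)$ bits; the transducer can enumerate these on the fly by sweeping through the transition list of $T$ and, for each $\tau$ with output $b_1\cdots b_k$, emitting the $k$ reversed triples using only a counter for the current position~$i$ and pointers into the description of $\tau$. The initial and final states are copied verbatim. Since no computation beyond incrementing counters and copying symbols is performed, the whole construction fits in logarithmic workspace. The main obstacle, as anticipated, is purely the bookkeeping in the normalization step; the ``mathematical'' content (closure of rational relations under inversion) is immediate once the FST is in letter normal form.
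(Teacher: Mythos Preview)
Your argument is correct and is precisely the standard construction; the paper itself does not give a proof of this lemma, merely citing it as a well-known fact from Berstel's book with the remark that the log-space algorithm ``follows from straight-forward constructions.'' Your normalization-then-swap description is exactly what that remark points to, so your write-up is actually more detailed than what the paper provides.
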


A \emph{rational cone} is a family of languages $\C$
that is closed under the rational dominance relation: 
$A \lerat B $ and $B \in \C$ imply $A \in \C$.
If there exists a language $F\in \C$ such that 
$L \lerat F$ 
for any $L \in \C$,
then $\C$ is a \emph{principal} rational cone generated by $F$; we denote it as $\C = \T(F)$.

Rational transductions for context-free languages were thoroughly investigated in the 1970s, particularly by the 
French school. The main results of this research were published in Berstel's book~\cite{BerstelBook}. As described in~\cite{BerstelBook}, 
it follows from the Chomsky-Sch\"utzenberger theorem
that $\CFL$ is a principal rational cone:   $\CFL = \T(D_2)$, where $D_2$ is the Dyck language on two types of brackets.

\subsection{Computational Models}

Firstly, we define BA. We provide the definition that is equivalent to the original definition from~\cite{BalloonHU67} but has technical differences, for the sake of convenience. Then we provide the definitions of other models: the refined definition of Balloon automata in terms of protocols and computational models based on $\logTM$ that are connected with $\nreg$-problem as well as with $1\NBPA$.

As it said, the balloon is a storage medium of unspecified structure. Thus its
states are represented by (a subset of) positive integers. A BA can
get limited information about the state of the balloon (the balloon
information function in the definition below) and can
modify the states of the balloon (the balloon control function).
Here we need 1BAs only. So we give the definition for them. The definitions for 2BAs are similar, they are provided in~\cite{BalloonHU67}.

\begin{definition}\label{def::BA}\definitionbodyfont 
A $1$-way balloon automaton \textup{(1BA)}
        is defined by a tuple $$\langle S, \Sigma_{\lendmarker\rendmarker}, B_S, B_I, \getbi, \updbs, F, s_0, \delta  \rangle,\quad\text{where}$$	
	\begin{itemize}
		\item $S$ is the finite set of automaton states.
		\item $\Sigma_{\lendmarker\rendmarker} = \Sigma\cup\{\lendmarker,\rendmarker\}$, where $\Sigma$ is the finite input alphabet and ${\lendmarker,\rendmarker}$ are the endmarkers. The input has the form $\lendmarker\! w\!\rendmarker,\, w\in \Sigma^*$.
		\item $B_S \subseteq \ZZ_{>0}$ is the set of the balloon states.
		\item $B_I$ is the finite set of the balloon information states.
		\item $\getbi: B_S \to B_I$ is a total computable function (balloon information function).
		\item $\updbs$ is a partially computable function from $ S\times B_S$ to $B_S$ (balloon control function).
		\item $F \subsetneq S$ is the set of the final states.
		\item $s_0 \in S\setminus F$ is the initial state.
		\item $\delta$ is the transition relation (a partial function for deterministic automata) defined as
                  $\delta \subseteq 	(S\times\Sigma_{\lendmarker\rendmarker, \eps}\times	B_I) \times S$; hereinafter $\Gamma_\eps = \Gamma \cup \{\eps\}$ for any alphabet $\Gamma$.
	\end{itemize}
\end{definition}

\begin{definition}\definitionbodyfont 
	A \emph{configuration} of a $1\BA$ is a triple $(q, u, i) \in S\times\Sigma_{\lendmarker\rendmarker}^*\times B_S$, where $u$ is the unprocessed part of the input $w$ so $u$ is either $\lendmarker\! w\! \rendmarker$ or a suffix of $w\! \rendmarker$. The \emph{initial configuration} of $1\BA$ is $(s_0, \lendmarker\! w\! \rendmarker, 1)$, a \emph{move} of $1\BA$ is defined by the \emph{relation} $\vdash$
on configurations  as follows:
	$ (q, \sigma u, i) \vdash (p, u, j)$,  where $\sigma \in \Sigma_{\lendmarker\rendmarker,\eps}$  if 
$j = \updbs(p,i),\, p \in \delta(q, \sigma, \getbi(i))$.
A $1\BA$ accepts the input $w$ if there exists a sequence of moves \emph{(computational path)} such that after processing of $\lendmarker\! w\! \rendmarker$ the final state is reached, i.e.,
$ (s_0, \lendmarker\! w\! \rendmarker, 1) \ders  (q_f, \eps, i)$,  where $q_f \in F,\, i \in B_S$. 
\end{definition}

It is not easy to define classes of balloon automata (like PDAs or SAs) 
since one needs to define valid families of functions $\getbi$ and $\updbs$. One can see an example of PDAs definition in terms of BA in~\cite{BalloonHU67}. We suggest another approach for the definition of BA classes in Section~\ref{sect:BA}. The  approach simplifies the definitions since it is only needed to define a language of correct protocols to define an ADS.

We define a protocol as a sequence of triples $p_i = u_i\q_i\r_i$ of the query-word $u_i$, the query $\q_i$ and the response~$\r_i$ on the query. Numerous extra conditions are listed in the following formal definition. 

\begin{definition}\label{def:PROT}\definitionbodyfont 	
	Let $\Gamma_{\wr},\Gamma_{\query}, \Gamma_{\resp}$ be finite disjoint alphabets such that $\Gamma_\query \neq \es, \Gamma_{\resp} \neq \es$.
	 Let $\valid \subseteq \Gamma_{\query}\times\Gamma_{\resp}$ be a relation that provides the correspondence between queries and possible responses. A \emph{protocol} is a word $p$ such that $p = p_1\cdots p_n$, where $n \geq 0$, $p_i = u_i\q_i\r_i$, $u_i \in \Gamma_{\wr}^*$, $q_i \in \Gamma_{\query}$,
$r_i \in \Gamma_{\resp}$, and $\r_i \in \valid(\q_i)$. We call a word $p_i$ a \emph{query block}. We say that a language $\PROT\subseteq (\Gamma_{\wr}^*\Gamma_{\query}\Gamma_{\resp})^*$ is \emph{a language of correct protocols} 
if the axioms \rm{(i-v)} hold:
\begin{enumerate}[leftmargin=+.5in,label=\rm{(\roman*)}]
\item\label{prot:axioms:empty} $\eps \in \PROT$;
	 	\item\label{prot:axioms:corr} $\forall p \in \PROT : p$ is a protocol;
		\item\label{prot:axioms:prefixes} $\forall p \in \PROT : $ if $p = p_1p_2$ and $p_1$ is a protocol, then $p_1 \in \PROT$;
		\item\label{prot:axioms:ext} $\forall p \in \PROT\; \forall u \in \Gamma_{\wr}^*\; \forall \q \in \Gamma_{\query}\; \exists \r \in \Gamma_{\resp} : pu\q\r \in \PROT$;
		\item\label{prot:axioms:uniqr} $\forall pu\q\r\in\PROT:$ if $p' \in\PROT$ and $p' = pu\q\r's$, then $\r' = \r$;
		\item\label{prot:axioms:reset} $\exists \q\in \Gamma_{\query}, \r \in \Gamma_{\resp} \; \forall p_1, p_2 \in \PROT : p_1\q\r p_2 \in \PROT $.
	 \end{enumerate}
\end{definition}

Axiom~\ref{prot:axioms:reset} does not hold in the general case, e.g., for SAs and counter automata without zero tests. It is needed to describe the connection of automata with an ADS with BAs in Section~\ref{sect:BA}.

A language of correct protocols $\PROT$ generates the corresponding class of languages, the principal rational cone~$\T(\PROT)$. All examples of BAs languages classes in~\cite{BalloonHU67} can be
presented as~$\T(\PROT)$. We provide here only two examples.

\begin{example}\label{ex:DPROT}\examplebodyfont
	It is well-known~\cite{BerstelBook} that $\CFL = \T(D_2)$, where $D_2$ is the Dyck language with two types
	of parentheses. It is also well-known that a Dyck word  is a protocol of the stack. We transform the language $D_2$ into a language of protocols $\DPROT$ as follows.
	
	We define the alphabets $\Gamma_{\wr} = \es$, $\Gamma_{\query} = \{ \push_\textsf(, \push_\textsf[, \pop \}$, $\Gamma_\resp = \{ \textsf(, \textsf), \textsf[, \textsf] \}$, $\valid = \{ (\push_\textsf[, \textsf[\,), (\push_\textsf(, \textsf(\,), (\pop, \textsf]\,), (\pop, \textsf)\,) \}$.
	To define correct protocols we use an FST $T$ that erases all symbols from $\Gamma_{\query}$ of the input. 
	So, $$\DPROT = \{ p \mid T(p) \in D_2\}.$$

	By the definition $D_2 \lerat \DPROT$, so we have that $\T(D_2) \subseteq \T(\DPROT)$. 
	It is also easy to show that $\DPROT \lerat D_2$, so $\T(\DPROT)= \T(D_2) = \CFL$.
	
	Note that we set here $\Gamma_{\wr} = \es$ for the sake of simplicity. One can use another variant: 
	$\Gamma_{\wr} = \{ \textsf(, \textsf[ \}$, $\Gamma_{\query} = \{ \textsf{multipush}, \pop \}$,
	$\Gamma_{\resp} = \{ \textsf{pushed}, \textsf), \textsf] \}$.
\qed\end{example}

The following example is a starting point  for  the generalization presented in this paper.
\begin{example}\label{ex:SAPROT}\examplebodyfont	% {ex:SAPROT}{RVCSR2018}
	The data structure Set consists of the set $\SS$ which is initially empty. Set supports the following operations: $\ins(x): \SS \to \SS\cup\{x\}$, $\out(x): \SS \to \SS\setminus\{x\}$, $\test(x): x \stackrel{?}{\in} \SS$. We define the protocol language $\SAPROT$ consistently with~\cite{RV17,RVCSR2018}, so the elements of alphabets below are individual symbols while they are words in~\cite{RV17,RVCSR2018}. $\Gamma_\wr = \{a,b\}$, $\Gamma_{\query} = \{\#\ins, \#\out, \#\test \}, \Gamma_{\resp} = \{\#, +\#, -\#\}, \valid = \{ (\#\ins,\#), (\#\out,\#), (\#\test,+\#), (\#\test,-\#)\}$. 
	
	It was proved in~\cite{RV17} that $\CL(1\mathrm{NSA}) = \T(\SAPROT)$. 	
        \qed
\end{example}

\begin{definition}\label{def:BAbyPROT}\definitionbodyfont  
	Fix a language of correct protocols $\PROT$. An \emph{automaton equipped with auxiliary data structure} $\BP$ (defined by $\PROT$) is defined by a tuple
	$$ \langle S, \Sigma_{\lendmarker\rendmarker}, \Gamma_{\wr}, \Gamma_{\query}, \Gamma_{\resp}, F, s_0, \delta  \rangle, \text{ where} $$
		\begin{itemize}
			\item $S$, $\Sigma_{\lendmarker\rendmarker}$, $F$, $s_0$ are the same as in Definition~\ref{def::BA}, so as $\Sigma_{\lendmarker\rendmarker, \eps}$.
			\item $S = S_{\wr} \cup S_{\query}$, $ S_{\wr} \cap S_{\query} = \es$.
			\item $\PROT \subseteq (\Gamma_{\wr}^*\Gamma_{\query}\Gamma_{\resp})^*$.
			\item $\delta$ is the transition relation defined as 
			$$\delta \subseteq  ([S_\wr \times \Sigma_{\lendmarker\rendmarker, \eps}] \times [\Gamma_{\wr}^* \times S])
					 \cup		(S_\query \times \Gamma_{\query} \times \Gamma_{\resp} \times S_\wr).$$		
		\end{itemize}
	The automaton has a one-way write-only query tape. During the processing of the input, it writes query-words $u_i \in \Gamma_{\wr}^*$ on the query tape, performs queries $\q_i$, and receives responses $\r_i$ such that
	 $u_1\q_1\r_1 \cdots u_n\q_n\r_n \in \PROT $.
	 After each query, the query tape is erased.
	 
	 A \emph{configuration} of an ADS-automaton is a tuple 
	 $$(s, v, u, p) \in S\times\Sigma_{\lendmarker\rendmarker}^*\times\Gamma_{\wr}^*\times(\Gamma_{\wr}^*\Gamma_{\query}\Gamma_{\resp})^*,$$ where $v$ is the unprocessed part of the input~$w$, i.e., $v$ is the suffix of $\lendmarker\!w\!\rendmarker$, $u$ is the content of the work tape, and $p$ is the protocol of the automaton operating with the data structure. 
	 A move of an automaton is defined via the relation $\vdash$ on configurations which is defined as follows:
	 \begin{align}
	   (s, av, u, p) &\der (s', v, ux, p), & &\text{ if } s \in S_\wr,\, (s, a, x, s') \in \delta, \label{eq:wrmove}\\
	   (s, v, u, p) &\der (s', v, \eps, pu\q\r), & &\text{ if } s \in S_\query,\, (s, \q, \r, s') \in \delta,\, pu\q\r\in \PROT. \label{eq:qmove}
	 \end{align}
	A configuration is \emph{initial} if it has the form $(s_0, \lendmarker\!w\!\rendmarker, \eps, \eps)$, a configuration is \emph{accepting} if it has the form $(s_f, \eps, \eps, p)$, where $s_f \in F, p \in \PROT$. A word $w$ is \emph{accepted} by an automaton with ADS if $(s_0, \lendmarker\!w\!\rendmarker, \eps, \eps) \ders (s_f, \eps, \eps, p)$. An automaton is \emph{deterministic} if for all configurations $c, c_1, c_2$ from $c \der c_1$ and $c \der c_2$ follows $c_1 = c_2$.
\end{definition}

For the next two models, we provide the definitions on the implementation level only. 

\begin{definition}\label{def:NBPlogTM}\definitionbodyfont 
 A $\DBP\logTM$ ($\NBP\logTM$) is a deterministic {(nondeterministic)} $\logTM$~$M$ equipped with an ADS defined by the language of correct protocols~$\PROT$. I.e., $M$ is equipped with an additional write-only one-way query tape that is used to write down a query word $u_i$ and perform a query. After a query $\q_i$ is performed, the tape is erased and the finite state control of $M$ receives the result $\r_i$ of the query~$\q_i$. The query results are consistent with $\PROT$, i.e., $p_1\cdots p_n \in \PROT$, $p_i = u_i\q_i\r_i$. 
 
A \emph{configuration} of $\BP\logTM$ is a triple~$(c, u, p)$ where $c$ is the configuration of $\logTM$-part, $u$ is the word written on the query tape, and $p\in\PROT$ is the protocol that is the result of all the performed queries. A $\BP\logTM$~$M$ accepts a word $w$ if $(c_0(w), \eps, \eps) \ders (c_f, \eps, p)$, where $c_0(w)$ is the initial configuration of the $\logTM$-part of $M$, $c_f$ is the accepting configuration of $\logTM$-part of $M$, and $p\in \PROT$, the relation $\der$ corresponds to the $M$'s moves.% of %$\BP\logTM$.
\end{definition}

\begin{definition}\label{def:NAFlogTM}\definitionbodyfont 
 Let $F$ be an arbitrary formal language (filter). A $\DA_{F}\logTM$ ($\NA_{F}\logTM$) is a deterministic (non-deterministic) log-space TM equipped with a read-only one-way infinite tape called \emph{advice tape}. At the beginning of the computation, the advice tape contains a word $y\Lambda^{\infty}$, where $y\in F$ and $\Lambda$ is a symbol that indicates empty cells.
 
A \emph{configuration} of an $\mathrm{A}_{F}\logTM$ $M$ is a pair~$(c, u)$ where $c$ is the configuration of the $\logTM$-part of $M$, $u$ is the unprocessed part of $y$. $M$ accepts a word $x$ if there exists $y \in F$ such that $(c_0(x), y) \ders (c_f, \eps)$, where $c_0(x)$ is the initial configuration of the $\logTM$-part of $M$, $c_f$ is the accepting configuration of the $\logTM$-part of $M$.
\end{definition}

An equivalent model to $\DA_{F}\logTM$s appeared in~\cite{Vya09CSR} and its journal version~\cite{VyaPIT11} under the name ``models of generalized nondeterminism (GNA)''
and lead to the appearance of the $\reg(F)$ problem. In this paper we repeat the steps of~\cite{Vya09CSR,VyaPIT11} to establish the connection between $\NA_{F}\logTM$ and $\nreg(F)$ problem in
Section~\ref{sect:Models}
to prove one of the main results of the paper Eq.~\eqref{eq:NRRlogTMresult}. We also show the equivalence between $\DA_{F}\logTM$s and GNA. The difference is that in GNA it is allowed not to process the advice till the end of the word, so it was demanded for $F$ to be a prefix-closed language in~\cite{Vya09CSR,VyaPIT11}.

\section{Principal Rational Cones and\\ the NRR-Problem}\label{sect:Cones}

In this section, we provide the core of our technique. We prove that $\CL(1\NBPA)$ is a principal rational
cone generated by the language of correct protocols~$\PROT$, i.e., $\CL(1\NBPA) = \T(\PROT)$; it is the first main result of the section. 
This fact yields structural results about the family~$\CL(1\NBPA)$, as well as the results on the complexity of the emptiness problem.  
We focus in this section on the connection between the non-emptiness problem and the $\nreg(\PROT)$ problem. We prove that these problems are equivalent under log-space reductions, it is the second main result of the section. 
It leads us to the main results of the paper in
Section~\ref{sect:Models}.
We provide in this section structural results that naturally arise in the proofs. Other structural results are discussed in Section~\ref{sect:BA} since their relation to~\cite{BalloonHU67}.

Most of the results of this section directly generalize the results of~\cite[Section~3]{RV17} (see the full journal version~\cite{RV21}). 
In most cases, to get a generalized result, one can substitute SA protocols (see Example~\ref{ex:SAPROT}) with general protocols as defined in Definition~\ref{def:PROT}. 
So, our general approach comes from the generalization of the technique that was developed for SAs. One can also find in~\cite{RV21} more technically detailed proofs.

\begin{lemma}\label{lemma:MPROT}%[cf. Proposition 14 in \cite{RVarXiv}]
	There exists a $1\NBPA$ $M_\PROT$ recognizing $\PROT$.
\end{lemma}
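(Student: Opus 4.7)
The plan is to have $M_\PROT$ simulate the protocol directly: as it scans the input, it mirrors each symbol onto its own ADS and demands that the ADS's responses match those recorded on the input. Because $\Gamma_\wr$, $\Gamma_\query$, $\Gamma_\resp$ are pairwise disjoint, every input symbol can be classified on the fly, so $M_\PROT$ can parse the input into consecutive query blocks $u_i \q_i \r_i$ without any lookahead.

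Concretely, I take the input alphabet $\Sigma = \Gamma_\wr \cup \Gamma_\query \cup \Gamma_\resp$, write-states $S_\wr = \{s_0, s_\wr, s_f\} \cup \{t_\r : \r \in \Gamma_\resp\}$, query-states $S_\query = \{q_\q : \q \in \Gamma_\query\}$, and $F = \{s_f\}$. The transitions, in the formats of \eqref{eq:wrmove} and \eqref{eq:qmove}, are: $(s_0, \lendmarker, \eps, s_\wr)$; for each $a \in \Gamma_\wr$, $(s_\wr, a, a, s_\wr)$, copying $a$ onto the query tape; for each $\q \in \Gamma_\query$, $(s_\wr, \q, \eps, q_\q)$, switching to query mode while remembering $\q$ in the state; for each $(\q,\r) \in \valid$, $(q_\q, \q, \r, t_\r)$, performing the query on the ADS; for each $\r \in \Gamma_\resp$, $(t_\r, \r, \eps, s_\wr)$, verifying that the response symbol read from the input agrees with the one just returned by the ADS; and finally $(s_\wr, \rendmarker, \eps, s_f)$.

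Correctness: on input $\lendmarker w \rendmarker$ with $w = u_1\q_1\r_1 \cdots u_n\q_n\r_n \in \PROT$, I argue by induction on $i$ that there is a run reaching $s_\wr$ with ADS-protocol $p_i = u_1\q_1\r_1\cdots u_i\q_i\r_i$. Axiom~\ref{prot:axioms:ext} supplies some response extending $p_{i-1}u_i\q_i$ to a protocol in $\PROT$, so the move \eqref{eq:qmove} can fire; axiom~\ref{prot:axioms:uniqr} forces this response to be the unique valid one, and since $p_i \in \PROT$ by axiom~\ref{prot:axioms:prefixes}, that unique response coincides with $\r_i$, allowing the verification transition $(t_{\r_i}, \r_i, \eps, s_\wr)$ to fire. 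Conversely, any accepting run builds an ADS-protocol lying in $\PROT$ (by \eqref{eq:qmove}), and the verification transitions force this protocol to equal the input word; hence acceptance entails $w \in \PROT$. The base case $w = \eps$ is handled by axiom~\ref{prot:axioms:empty}.

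The only delicate points of the argument are the existence and uniqueness of the ADS response at each query block, which are precisely what axioms~\ref{prot:axioms:ext} and~\ref{prot:axioms:uniqr} deliver; once they are invoked, no genuine obstacle remains, and the other axioms (correctness of protocol shape and prefix-closure) are used only to justify that the induction stays inside $\PROT$.
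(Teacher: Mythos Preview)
Your proposal is correct and follows essentially the same approach as the paper: copy each $u_i$ to the query tape, perform $\q_i$, and verify that the ADS response matches the $\r_i$ recorded on the input. The paper's proof gives this idea in two sentences, whereas you spell out explicit states and transitions and supply a careful correctness argument invoking axioms~\ref{prot:axioms:empty}, \ref{prot:axioms:prefixes}, \ref{prot:axioms:ext}, \ref{prot:axioms:uniqr}; this added detail is sound (indeed, your machine is even deterministic) and nothing in it diverges from the paper's intent.
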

\begin{proof}
  Let us assume that $M_\PROT$ has on the input the word of the form $p_1\cdots p_n$, where $p_i = u_i\q_i\r_i$ (since it is a regular condition). $M_\PROT$ writes a word $u_i$ on the query tape, performs the query $\q_i$ and tests that the responce is $\r_i$. If all tests are correct than $p$ is
  accepted; otherwise, it is rejected.
\end{proof}

\begin{lemma}\label{lemma:TwoLettersProt}
	For each language of correct protocols $\PROT \subseteq (\Gamma_{\wr}^*\Gamma_{\query}\Gamma_{\resp})^* $ there exists a language of correct protocols $\PROT_{\{a,b\}} \subseteq (\{a,b\}^*\Gamma_{\query}\Gamma_{\resp})^* $, provided $(\Gamma_{\query}\cup\Gamma_{\resp}) \cap \{a,b\} = \es$, such that the following properties hold
	\begin{itemize}
		\item $\CL(1\NBPA) = \CL(1\mathrm{NB}_{\PROT_{\{a,b\}}}\mathrm{A})$,
		\item $\CL(1\DBPA) = \CL(1\mathrm{DB}_{\PROT_{\{a,b\}}}\mathrm{A})$,
		\item There exists a DFST $T$ such that $T(\PROT) = \PROT_{\{a,b\}}$ and $T^{-1}$ is a DFST,
		\item For each $1x\BPA$ $M$ there exists an equivalent $1x\mathrm{B}_{\PROT_{\{a,b\}}}\mathrm{A}$ $M_{\{a,b\}}$ such that $M_{\{a,b\}}$ is log-space constructible by $M$ and vice versa ($x \in \{N, D\}$).
	\end{itemize}	
\end{lemma}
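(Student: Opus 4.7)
The plan is to encode the write alphabet $\Gamma_{\wr}$ bit-by-bit using a fixed-length binary code, define $\PROT_{\{a,b\}}$ as the induced language of encoded protocols (with a mild padding clause so axiom~\ref{prot:axioms:ext} still holds), and show that both the language of correct protocols and the class of automata transport cleanly across the encoding.

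First, I choose $k=\max\{1,\lceil\log_{2}|\Gamma_{\wr}|\rceil\}$ and fix an injection $\phi:\Gamma_{\wr}\hookrightarrow\{a,b\}^{k}$, extended letter-by-letter to $\phi^{*}:\Gamma_{\wr}^{*}\to\{a,b\}^{*}$. I then define the DFST $T$ with a single state that copies each symbol of $\Gamma_{\query}\cup\Gamma_{\resp}$ verbatim and rewrites each $c\in\Gamma_{\wr}$ as $\phi(c)$; its inverse $T^{-1}$ is a DFST with $O(k)$ states that, within every write block, accumulates $k$ input bits and outputs $\phi^{-1}$ of that block, and copies query/response letters verbatim. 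Both transducers are plainly log-space constructible from $\phi$, which itself fits in log space since $k=O(\log|\Gamma_{\wr}|)$.

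Next, I define
\[
\PROT_{\{a,b\}}=\bigl\{\phi^{*}(u_{1})w_{1}\q_{1}\r_{1}\cdots\phi^{*}(u_{n})w_{n}\q_{n}\r_{n}\;:\;u_{1}\q_{1}\r_{1}\cdots u_{n}\q_{n}\r_{n}\in\PROT,\ w_{i}\in\{a,b\}^{<k}\bigr\},
\]
where the trailing bits $w_{i}$ of length strictly less than $k$ are treated as semantically transparent padding. Axioms~\ref{prot:axioms:empty}--\ref{prot:axioms:reset} are inherited from $\PROT$ via this decoding: axiom~\ref{prot:axioms:ext} for arbitrary $u'\in\{a,b\}^{*}$ is satisfied by truncating $u'$ to the longest multiple-of-$k$ prefix, applying $\phi^{-1}$ to obtain $u\in\Gamma_{\wr}^{*}$, extracting the corresponding response in $\PROT$, and reattaching the residue as $w_{i}$; uniqueness (axiom~\ref{prot:axioms:uniqr}) holds because $\phi$ is injective and the residue does not influence the decoding. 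The image $T(\PROT)$ consists exactly of the elements of $\PROT_{\{a,b\}}$ with all $w_{i}=\varepsilon$; enforcing $w_{i}=\varepsilon$ on $T(\PROT_{\{a,b\}})$ gives $T(\PROT)=\PROT_{\{a,b\}}$ modulo this canonical choice, and $T^{-1}$ is deterministic because distinct canonical encodings decode to distinct $\PROT$-words.

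For the language-class equivalences I translate in both directions. Given a $1x\BPA$ $M$, define $M_{\{a,b\}}$ by keeping all states and replacing every write transition $(s,a,x,s')\in\delta$ with $s\in S_{\wr}$, $x\in\Gamma_{\wr}^{*}$, by the single transition $(s,a,\phi^{*}(x),s')$; query/response transitions are unchanged. An induction on computation length gives a bijective correspondence between accepting runs of $M$ and $M_{\{a,b\}}$, preserving determinism. Conversely, given a $1x\mathrm{B}_{\PROT_{\{a,b\}}}\mathrm{A}$ $M'$, I build $M$ by augmenting the state space of $M'$ with a buffer storing up to $k-1$ previously emitted bits (so $|Q|\le|Q'|\cdot 2^{k-1}$, polynomial in the size of $M'$ since $2^{k}=O(|\Gamma_{\wr}|)$): whenever $M'$ writes a word $y\in\{a,b\}^{*}$ of length $\ell$, $M$ consumes the bits, outputs $\phi^{-1}$ of each full $k$-block to its query tape as the corresponding $\Gamma_{\wr}$-letter, and stores the trailing $<k$ bits in the buffer; at a query step the buffer is flushed (this corresponds precisely to the padding clause used to define $\PROT_{\{a,b\}}$), so the query issued to $\PROT$ receives the same response as $M'$ receives from $\PROT_{\{a,b\}}$. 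Both translations are clearly log-space constructible since each new transition can be emitted from local data in the input description.

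The main obstacle I anticipate is the consistency check between the padding clause in the definition of $\PROT_{\{a,b\}}$ and the operational semantics of the backward simulation: one must verify that the flushed buffer bits can indeed be absorbed as $w_{i}$ in the protocol produced by $M$'s simulation of $M'$, so that every accepting run of $M'$ on $\PROT_{\{a,b\}}$ lifts to an accepting run of $M$ on $\PROT$. This reduces to the observation that the decoding $\phi^{*-1}$ is insensitive to such trailing bits, which is exactly how $\PROT_{\{a,b\}}$ was set up in the previous paragraph.
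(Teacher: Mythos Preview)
Your construction has a genuine gap at the third bullet of the lemma. You introduce the padding words $w_i\in\{a,b\}^{<k}$ precisely so that axiom~\ref{prot:axioms:ext} survives for arbitrary $u'\in\{a,b\}^*$, but then $T(\PROT)$ is only the padding-free slice of your $\PROT_{\{a,b\}}$, hence a proper sublanguage whenever $k\geq 2$; the phrase ``modulo this canonical choice'' is an admission that the required equality $T(\PROT)=\PROT_{\{a,b\}}$ does not actually hold. Worse, the padding does not even rescue axiom~\ref{prot:axioms:ext} when $|\Gamma_{\wr}|$ is not a power of two: any $u'\in\{a,b\}^{k}\setminus\mathrm{im}(\phi)$ has length exactly $k$, so it cannot be written as $\phi^{*}(u)w$ with $|w|<k$, and therefore $u'\q\r\notin\PROT_{\{a,b\}}$ for every $\q,\r$. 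The same unhandled case resurfaces in your backward simulation, where you never say what $M$ does when the buffer fills with a non-codeword $k$-block.

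The paper avoids all of this by taking a variable-length, self-delimiting code instead of a fixed-length one: the $i$-th letter of $\Gamma_{\wr}$ is encoded as $ab^{i}a$. This code is prefix (indeed bifix), so both $T$ and its relational inverse are DFSTs, and the paper simply \emph{defines} $\PROT_{\{a,b\}}=T(\PROT)$, making the third bullet hold by fiat and eliminating any need for padding or bit buffers. The simulations in both directions then just carry $T$ (respectively $T^{-1}$) in an auxiliary component of the finite control. If you want to salvage a fixed-length scheme, you would need to make the decoding total---e.g., map every $k$-block surjectively onto $\Gamma_{\wr}$---and drop the padding clause so that $\PROT_{\{a,b\}}=T(\PROT)$ literally; but at that point the variable-length code is strictly simpler.
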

\begin{proof}
	Enumerate all letters from $\Gamma_{\wr}$ and encode the $i$-th letter as $ab^ia$. Such encoding is computable by a DFST $T$ and the inverse encoding is computed by $T^{-1}$ that is a DFST as well. So, $\PROT_{\{a,b\}} = T(\PROT)$ (we assume that $T$ preserves letters from $\Gamma_{\query}\cup\Gamma_{\resp}$). Now we show that for each $1\NBPA$ $M$ there exists an equivalent $1\mathrm{NB}_{\PROT_{\{a,b\}}}\mathrm{A}$ $M_{\{a,b\}}$.
	
	By our construction, $M_{\{a,b\}}$ simulates $M$, i.\,e., $M_{\{a,b\}}$ has states of the form $(s, \aux)$ where $s$ is a state of $M$ and $\aux$ is an auxiliary information needed for simulation; so for each configuration $((s, \aux), v, u', p')$ of $M_{\{a,b\}}$ there is a corresponding configuration $(s, v, u, p)$ of $M$, where $p' = T(p)$ and $u'$ is a prefix of $T(u)$. $M_{\{a,b\}}$ computes $T(u)$ by simulation of $T$ via finite control and information of this simulation is stored in $\aux$; the other part of finite control simulates $M$'s transitions.
	
	Each $1\mathrm{NB}_{\PROT_{\{a,b\}}}\mathrm{A}$ $M_{\{a,b\}}$ can be simulated by a $1\NBPA$ $M$ in the same way, one shall use $T^{-1}$ instead of $T$. Note that described simulations preserve determinism and the transformations between $M$ and $M_{\{a,b\}}$ are log-space computable.
\end{proof}

\begin{lemma}\label{lemma:TcompM}
Let $T$ be an FST with the input alphabet $\Delta$ and the output alphabet $\Sigma$ and $M$ be a $1\NBPA$ over the alphabet~$\Sigma$. There exists a $1\NBPA$ $ M' = M\circ T$ recognizing the language  $T^{-1}(L(M))$. If $T$ is a DFST and $M$ is a $1\DBPA$ then $M'$ is a $1\DBPA$ as well.   %$\{w \mid \exists u \in T(w) : u \in L(M) \} \subseteq \Delta^*$.
\end{lemma}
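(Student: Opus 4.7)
The plan is to build $M'$ by a product-style construction that simulates $T$ on the input $w\in\Delta^*$ and, in parallel, feeds the word $v\in\Sigma^*$ produced by $T$ into a simulated copy of $M$. States of $M'$ will have the form $(q_T,s_M,\beta)$, where $q_T$ is a state of $T$, $s_M$ is a state of $M$, and $\beta\in\Sigma^{\le k}$ is a finite buffer storing symbols that $T$ has emitted but $M$ has not yet consumed. Here $k$ is a bound on the length of the output words appearing on single transitions of $T$; equivalently, I will first normalize $T$ so that every transition outputs at most one symbol, by inserting intermediate $T$-states that read $\eps$ from the input and emit one symbol at a time (a standard log-space preprocessing). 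After this normalization the buffer becomes unnecessary and $k=1$.

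With $T$ normalized, a move of $M'$ comes in two kinds. Of type (a) are the $T$-moves: $M'$ consumes a symbol $a\in\Delta_\eps$ (possibly $\eps$), updates $q_T$ according to a chosen $T$-transition, and, if that transition emits $b\in\Sigma_\eps$, immediately passes $b$ to the simulated $M$ as its next input symbol. Of type (b) are the internal $M$-moves that do not read from the input tape of $M$: write moves \eqref{eq:wrmove} with $a=\eps$ and query/response moves \eqref{eq:qmove}. These moves touch only $M$'s work tape and the ADS protocol tape, which live inside $M'$ unchanged, so they can be performed by $M'$ at any moment, independently of $T$. Endmarkers are handled by extending $T$ to preserve $\lendmarker,\rendmarker$ (mapping them to themselves with $\eps$ output except for emitting $\lendmarker$ and $\rendmarker$ to $M$ at the corresponding moments), or equivalently by hard-wiring the marker feeding into the finite control.

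Correctness then reduces to two straight-forward simulations. A computation of $T$ on $w$ producing $v$ together with an accepting computation of $M$ on $\lendmarker\!v\!\rendmarker$ can be shuffled into a single accepting computation of $M'$ on $\lendmarker\!w\!\rendmarker$; conversely, any accepting computation of $M'$ projects to an accepting $T$-computation producing some $v$ and to an accepting $M$-computation on $v$, giving $w\in T^{-1}(L(M))$. The ADS protocol is shared verbatim between $M$ and $M'$, so membership in $\PROT$ is inherited. Log-space constructibility of $M'$ from $T$ and $M$ is immediate from the product definition.

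For the deterministic case, assume $T$ is a DFST and $M$ is a $1\DBPA$. Fix a deterministic scheduling rule: whenever $M$ has a legal internal move of type (b), $M'$ performs it; otherwise $M'$ performs the unique type (a) move prescribed by $T$ (and thereby hands $M$ its next $\Sigma$-symbol, if any). Since both $T$ and $M$ are deterministic, every configuration of $M'$ has at most one successor, so $M'$ is a $1\DBPA$. The main subtlety, and the only real obstacle, is precisely this scheduling: we must ensure that prioritising internal $M$-moves is consistent with $M$'s own notion of determinism and does not accidentally make $M'$ nondeterministic when $M$'s $\eps$-transitions race with input-consuming ones. This is handled as in the classical deterministic Elgot--Mezei construction, because determinism of $M$ already resolves such races at the level of $M$'s transition function; the ADS tape and protocol, being internal to $M$, play no role in this resolution.
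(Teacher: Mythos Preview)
Your proposal is correct and follows essentially the same approach as the paper: the paper also builds $M'$ as a product construction that simulates $T$ in the finite control (packing the $T$-state and any pending output into an $\aux$ component of $M'$'s state) and feeds $T$'s output to the simulated $M$, noting that determinism is preserved when both $T$ and $M$ are deterministic. You have simply spelled out more carefully the buffer/normalization and the scheduling discipline that the paper leaves implicit.
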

\begin{proof}
	The simulation is performed in a straight-forward way. $M'$ guesses an image $w\in T(w')$ of the input word $w'$ such that $w \in L(M)$ if $T(w')\cap L(M) \neq \es$, computes $w$ by simulation of $T$ and simulates $M$ on the input $w$. $M'$ has configurations of the form $((s, \aux), v', u, p)$ that correspond to configurations $(s, v, u, p)$ of $M$. As in the proof of Lemma~\ref{lemma:TwoLettersProt}, the $\aux$ information is used to simulate $T$ via finite state control. The construction preserves determinism of $1\DBPA$ if $T$ is a DFST.
\end{proof}

\begin{lemma}\label{lemma:ProtExtr}%[cf. Lemma 15 in \cite{RVarXiv}]
	Let $M$ be a $1\NBPA$. There exists an FST $T_M$ such that $w\in L(M) $ iff $ T_M(w) \cap \PROT \neq \es$.
	Moreover, $p \in T_M(w)$ iff $M$ has a run on $w$ such that $(s_0, w, \eps, \eps) \ders (s_f, \eps, \eps, p)$.
\end{lemma}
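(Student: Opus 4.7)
The plan is to build $T_M$ as a finite state transducer that runs $M$ step by step on input $w$, transcribing onto its output tape everything $M$ would write on its query tape, and, at every query moment, appending the chosen query symbol $\q$ immediately followed by the response symbol $\r$. Since $\PROT$ may fail to be regular, $T_M$ cannot verify the constraint $pu\q\r\in\PROT$ that $M$'s query move~\eqref{eq:qmove} requires; that membership test is delegated to the external intersection $T_M(w)\cap\PROT$.

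For the formal construction, take the state space of $T_M$ to be $S\times\{0,1\}$, where the second coordinate records whether the simulated query tape is currently empty. The initial state is $(s_0,1)$, and the accepting states are $(s_f,1)$ with $s_f\in F$; this bit is what captures the condition $u=\eps$ in the accepting configuration $(s_f,\eps,\eps,p)$ of $M$. For every write transition $(s,a,x,s')\in\delta$ with $s\in S_\wr$, add an FST transition from $(s,b)$ to $(s',b\wedge[x=\eps])$ that consumes $a\in\Sigma_{\lendmarker\rendmarker,\eps}$ on the input tape and writes $x\in\Gamma_\wr^*$ on the output tape. For every query transition $(s,\q,\r,s')\in\delta$ with $s\in S_\query$, add an $\eps$-input transition from $(s,b)$ to $(s',1)$ outputting $\q\r$. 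The endmarkers $\lendmarker,\rendmarker$ are treated as ordinary input letters. By induction on the number of moves, runs of $T_M$ on $w$ producing output $p$ are in bijection with the move sequences $(s_0,\lendmarker\! w\!\rendmarker,\eps,\eps)\vdash^{\!\!\!*}\,(s_f,\eps,\eps,p)$ of a ``relaxed'' version of $M$ that omits the constraint $pu\q\r\in\PROT$ in rule~\eqref{eq:qmove}.

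Correctness is then forced by axiom~\ref{prot:axioms:prefixes} of Definition~\ref{def:PROT}. If $p\in T_M(w)\cap\PROT$, then for every decomposition $p=p'u\q\r\,p''$ arising along the corresponding relaxed run, the prefix $p'u\q\r$ is a protocol and a prefix of $p\in\PROT$, hence lies in $\PROT$; so every intermediate query move was already legal for $M$ itself, and we obtain a genuine accepting run. Conversely, any accepting run $(s_0,\lendmarker\! w\!\rendmarker,\eps,\eps)\ders(s_f,\eps,\eps,p)$ of $M$ is automatically a valid computation of $T_M$ emitting $p$, and such a $p$ necessarily belongs to $\PROT$ by~\eqref{eq:qmove}. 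Both clauses of the lemma follow: $w\in L(M)\iff T_M(w)\cap\PROT\neq\es$, and the ``moreover'' correspondence between outputs of $T_M$ (in $\PROT$) and accepting $M$-runs. The main obstacle is precisely this mismatch between what an FST can enforce and the per-step $\PROT$-membership check in~\eqref{eq:qmove}; prefix closure~\ref{prot:axioms:prefixes} is exactly what bridges it, and the empty-tape bit in the state is the only piece of bookkeeping needed beyond copying~$M$'s finite control.
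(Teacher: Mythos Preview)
Your construction and argument are correct and essentially match the paper's: $T_M$ copies $M$'s finite control, emits on its output tape whatever $M$ writes to the query tape, and on each query transition outputs $\q\r$ via an $\eps$-move. The only differences are cosmetic: you carry an extra bit tracking query-tape emptiness (the paper omits it, since any output with a trailing $\Gamma_{\wr}$-suffix lies outside $\PROT$ anyway and so is harmless for the intersection), and for the key implication ``$p\in T_M(w)\cap\PROT \Rightarrow M$ has the run'' you invoke prefix-closure axiom~\ref{prot:axioms:prefixes}, whereas the paper cites the response-uniqueness axiom~\ref{prot:axioms:uniqr}; both justifications are valid, yours being slightly more direct.
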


We denote by $s \xrightarrow[x]{a} s'$ the move of $T_M$ from the state $s$ to the state $s'$ on which it reads $a$ from the input tape and writes $x$ on the output tape.

\begin{proof}
	One can construct $T_M$ by $M$ as follows.
	$T_M$ has the same states as $M$ (and the same initial state and set of accepting states).
	In the case of move~\eqref{eq:wrmove}, $T_M$ has the move
        $s \xrightarrow[x]{a} s'$, 
	and in the case of move~\eqref{eq:qmove}, $T_M$ has moves $s \xrightarrow[\q\r']{\eps} s'$ for all $\r'$ such that $(s, \q, \r', s') \in \delta_M$. 
	
	Assertion $p \in T_M(w)\cap \PROT$ implies that $M$ has the corresponding run by axiom~\ref{prot:axioms:uniqr} in Definition~\ref{def:PROT}. So, if $T_M(w)$ contains a correct protocol $p$ then $M$ has the run $(s_0, w, \eps, \eps) \ders (s_f, \eps, \eps, p)$. The implication in the other direction directly follows from the construction of $T_M$.
\end{proof}

\begin{definition}\definitionbodyfont 
	An FST $T_M$ from Lemma~\ref{lemma:ProtExtr} called \emph{extractor} (of protocols).
\end{definition}

\begin{theorem}\label{th:BPisPCone}%[cf. Theorem 16 in \cite{RVarXiv}]
	 $\CL(1\NBPA) = \T(\PROT)$.
\end{theorem}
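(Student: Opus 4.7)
The plan is to prove the two inclusions $\CL(1\NBPA) \subseteq \T(\PROT)$ and $\T(\PROT) \subseteq \CL(1\NBPA)$ separately, relying on the lemmas already established in this section. All the real work has been pushed into those lemmas, so the proof will largely be bookkeeping with inversions of rational relations.

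For the inclusion $\CL(1\NBPA) \subseteq \T(\PROT)$, I would take an arbitrary $L \in \CL(1\NBPA)$ recognized by some $1\NBPA$ $M$, and feed $M$ into Lemma~\ref{lemma:ProtExtr} to obtain the extractor $T_M$ satisfying $w \in L(M) \iff T_M(w) \cap \PROT \neq \es$. This equivalence can be rewritten as $L = \{w : \exists p \in \PROT,\, (w,p) \in T_M\} = T_M^{-1}(\PROT)$, and by Lemma~\ref{lemma:invT} the relation $T_M^{-1}$ is itself computable by an FST. Therefore $L \lerat \PROT$, which gives $L \in \T(\PROT)$.

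For the reverse inclusion, it suffices to show that $\PROT \in \CL(1\NBPA)$ and that $\CL(1\NBPA)$ is closed under rational dominance, since $\T(\PROT)$ is by definition the smallest such family containing $\PROT$. Membership of $\PROT$ is exactly Lemma~\ref{lemma:MPROT}. For closure, suppose $L \lerat L(M)$ via an FST $T$, so that $L = T(L(M)) = \{w : \exists u \in L(M),\, (u,w) \in T\}$. By Lemma~\ref{lemma:invT}, $T^{-1}$ is realized by an FST, and Lemma~\ref{lemma:TcompM} applied to $T^{-1}$ and $M$ yields a $1\NBPA$ $M \circ T^{-1}$ whose language is $(T^{-1})^{-1}(L(M)) = T(L(M)) = L$. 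Specialising to $M = M_\PROT$ finishes the proof.

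The only subtle point I anticipate is that Lemma~\ref{lemma:TcompM} is formulated in terms of inverse images under FSTs ($T^{-1}(L(M))$), whereas rational dominance $L \lerat F$ is defined via direct images ($L = T(F)$); the two viewpoints are reconciled exactly by invoking Lemma~\ref{lemma:invT} to swap direct and inverse relations. Once one is careful about which FST plays which role, the argument is routine and no additional constructions beyond those in Lemmata~\ref{lemma:MPROT}, \ref{lemma:TcompM}, \ref{lemma:ProtExtr}, and \ref{lemma:invT} are needed.
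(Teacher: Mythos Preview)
Your proposal is correct and follows essentially the same approach as the paper's proof: both directions are obtained by combining Lemmata~\ref{lemma:MPROT}, \ref{lemma:TcompM}, \ref{lemma:ProtExtr}, and~\ref{lemma:invT}, with Lemma~\ref{lemma:invT} used exactly as you anticipate to pass between direct and inverse images. The only cosmetic difference is that the paper phrases the reverse inclusion directly for $L = T'(\PROT)$ rather than via a general closure argument, but the content is identical.
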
 
 \begin{proof}
 	Lemma~\ref{lemma:ProtExtr} implies that for each $1\NBPA$ $M$ there exists an extractor $T_M$ such that  $L(M) = T_M^{-1}(\PROT)$, and by Lemma~\ref{lemma:invT} there exists FST~$T = T_M^{-1}$ such that $L(M) = T(\PROT)$, so $L(M) \lerat \PROT$ and therefore $\CL(1\NBPA) \subseteq \T(\PROT)$.
	
 	The inclusion $\T(\PROT) \subseteq \CL(1\NBPA)$ follows from Lemmata~\ref{lemma:MPROT} and~\ref{lemma:TcompM}: for each $L = T'(\PROT)$ we take an FST $T = T'^{-1}$ and apply the lemmata.
 \end{proof}

\begin{theorem}\label{th:RReqNonEmp}
	$\ov{\EP1\NBPA} \lelog \nreg(\PROT) \lelog \ov{\EP1\NBPA}$.
\end{theorem}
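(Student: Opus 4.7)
The plan is to prove both inequalities by explicit log-space constructions, each direction using one of the two key building blocks established in the section: the extractor $T_M$ from Lemma~\ref{lemma:ProtExtr} for the forward direction, and the universal protocol-recognizer $M_\PROT$ from Lemma~\ref{lemma:MPROT} for the backward direction.

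For the reduction $\overline{\EP1\NBPA} \lelog \nreg(\PROT)$, I would start from a $1\NBPA$ $M$ and produce its extractor $T_M$ using the construction in Lemma~\ref{lemma:ProtExtr}; this construction is clearly log-space in $|M|$ since it only relabels transitions. By Lemma~\ref{lemma:ProtExtr}, $L(M)\neq\varnothing$ iff there is some input $w$ with $T_M(w)\cap\PROT\neq\varnothing$, which is in turn equivalent to $T_M(\Sigma^*)\cap\PROT\neq\varnothing$. Since the output language of an FST is regular, I would then convert $T_M$ into an NFA $\A$ by forgetting the input symbol on each transition (keeping the output word, split into at most two letters per edge by the construction in Lemma~\ref{lemma:ProtExtr}). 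This relabelling is log-space, so we get $\A\in\nreg(\PROT)\iff M\in\overline{\EP1\NBPA}$.

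For $\nreg(\PROT)\lelog\overline{\EP1\NBPA}$, given an NFA $\A$ over $\Gamma_\wr\cup\Gamma_\query\cup\Gamma_\resp$ I would form the product $M_\A=\A\times M_\PROT$ with the $1\NBPA$ $M_\PROT$ from Lemma~\ref{lemma:MPROT}. The states are pairs, the input-reading moves of $M_\PROT$ (writing to the query tape or reading a $\q,\r$ symbol from the input) advance both $\A$ and $M_\PROT$ synchronously, and any internal $\eps$-transitions of $M_\PROT$ leave the $\A$-component unchanged. The ADS interaction is inherited from $M_\PROT$, and the accepting states are products of accepting states of each factor. Then $L(M_\A)=L(\A)\cap\PROT$ and $L(\A)\cap\PROT\neq\varnothing\iff L(M_\A)\neq\varnothing$. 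The product is clearly log-space constructible. (Alternatively one could invoke Theorem~\ref{th:BPisPCone} together with closure of $\T(\PROT)$ under intersection with regular languages, but the direct product gives the explicit log-space bound.)

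I expect no deep obstacles: both directions are engineering on top of lemmata already proved. The main point to double-check is that the relevant transformations really are log-space, not just polynomial-time — in particular that the ``forget the input'' construction in direction one and the product construction in direction two can each be written by a log-space transducer that outputs transitions one by one while only keeping counters and pointers into the descriptions of $M$, $M_\PROT$, and $\A$; this is routine.
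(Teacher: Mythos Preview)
Your proposal is correct and follows essentially the same approach as the paper: the forward direction uses the extractor $T_M$ and projects to the output to get an NFA for $T_M(\Sigma^*)$ (the paper phrases this via Lemma~\ref{lemma:EM}), and the backward direction takes the product of $\A$ with $M_\PROT$ (the paper phrases this as $M_\PROT\circ T$ with $T$ the identity FST restricted to $L(\A)$, invoking Lemma~\ref{lemma:TcompM}). One minor imprecision: write-transitions of $M$ may output words $x\in\Gamma_\wr^*$ of length greater than two, not ``at most two letters per edge,'' but since $M$'s description is finite these words have bounded length and splitting them is still log-space.
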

\begin{proof}
	Let $M$ be the input of the non-emptiness problem $\ov{\EP1\NBPA}$ 
	and $T_M$ be the corresponding extractor. By Lemma~\ref{lemma:ProtExtr},
	$w \in L(M) \iff T_M(w) \cap \PROT \neq \es$. 
	So, $L(M) \neq \es \iff T_M(\Sigma^*) \cap\PROT\neq\es$.
	Construct an NFA $\A$ recognizing $T_M(\Sigma^*)$ by Lemma~\ref{lemma:EM} in log space.
	So, $$L(M) \neq \es \iff L(\A) \cap\PROT\neq\es \stackrel{\text{Def.~\ref{def:RR}}}{\iff} \A \in \nreg(\PROT).$$
	So we have proved $\ov{\EP1\NBPA} \lelog \nreg(\PROT)$.
	
	The reduction $\nreg(\PROT) \lelog \ov{\EP1\NBPA}$ follows from Lemmata~\ref{lemma:MPROT} and~\ref{lemma:TcompM}. We construct by $\A$ on the input of $\nreg(\PROT)$ the automaton $M = M_\PROT \circ T$, where $xTy \iff (x = y) \land (x \in L(\A))$.
\end{proof}

\begin{theorem}\label{th:MP1DPBAtoDRR}
	$\MP1\DBPA \lelog \reg(\PROT)$.
\end{theorem}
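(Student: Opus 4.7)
\begin{proofIdea}
The plan is to reduce the question ``does $M$ accept $w$?'' to a $\reg(\PROT)$ instance by exploiting the fact that a deterministic $1\DBPA$ $M$ has on any fixed input at most one actual interaction with the data structure. I will construct in log-space a DFA $\A$ over the protocol alphabet $\Gamma_\wr\cup\Gamma_\query\cup\Gamma_\resp$ such that $L(\A)$ consists of precisely those protocols that are compatible with an accepting run of $M$ on $w$ when the query responses are supplied from outside; intersecting with $\PROT$ then forces the responses to be authentic, so $L(\A)\cap\PROT\neq\es$ will hold iff $M$ accepts $w$.

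The DFA $\A$ simulates $M$ on the fixed input $w$ while reading a candidate protocol from its own input tape. Its principal states have the form $(s,i)$, where $s$ is an $M$-state and $i$ is a position in $\lendmarker\!w\!\rendmarker$, together with short linear chains of intermediate states used to spell out the word $x$ emitted by a write transition $(s,a,x,s')\in\delta$ and to separate the two letters of a query block. At a write configuration, $\A$ reads the letters of $x$ from its input tape and advances $i$ by one if $a\neq\eps$. At a query configuration with $s\in S_\query$, $\A$ reads the unique query letter $\q$ prescribed by $\delta$ (rejecting on any mismatch) and then reads the response $\r$, jumping to the $M$-successor $s_{\q,\r}$; since the branch is selected by the input symbol $\r$, $\A$ remains deterministic even though the analogous extractor $T_M$ from Lemma~\ref{lemma:ProtExtr} is not.

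Correctness follows by double inclusion. The actual run of $M$ on $w$, if accepting, produces a protocol $p^{*}\in\PROT$ which by construction is accepted by $\A$. Conversely, every $p\in L(\A)\cap\PROT$ encodes a hypothetical run of $M$ on $w$ whose query responses, by axiom~\ref{prot:axioms:uniqr} of Definition~\ref{def:PROT}, are forced to coincide with the authentic ones, so this run is $M$'s real accepting run and $p=p^{*}$. The main technical obstacle is the handling of pure $\eps$-moves $(s,\eps,\eps,s')$ in $M$, which would yield $\eps$-transitions in $\A$. Because $M$ is deterministic, the induced $\eps$-graph on principal states has out-degree at most one, so its reachability is computable in deterministic log-space by a functional-graph traversal; $\eps$-cycles correspond to non-terminating $M$-computations and are safely discarded, after which $\A$ becomes a genuine DFA of polynomial size whose description is log-space constructible from $\langle M,w\rangle$.
\end{proofIdea}
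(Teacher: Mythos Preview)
Your proposal is correct and follows essentially the same approach as the paper: build a DFA whose states carry $(s,i)$ plus auxiliary data, which simulates $M$'s deterministic run on the fixed input $w$ while matching the emitted query words, queries, and responses against its own input tape, so that $L(\A)\cap\PROT\ne\es$ iff $M$ accepts~$w$. You are slightly more explicit than the paper about eliminating $\eps$-transitions via a functional-graph traversal (the paper absorbs this into its unspecified $\aux$ component) and about invoking axiom~\ref{prot:axioms:uniqr} for the converse direction, but the construction and argument are the same.
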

\begin{proof}
	We construct a DFA $\A$ on the input of $\reg(\PROT)$ by $(w, M)$ on the input of $\MP1\DBPA$ via a log-space transducer. The idea is that $\A$ simulates $M$'s run on the input $w$ and checks the correctness of the protocol by reading the input word, that is a protocol $p \in \PROT$. The protocol~$p$ is accepted iff $p$ is the protocol of $M$ on processing of $w$ and $M$ accepts $w$.

        A state of $\A$ is a tuple $(s, i, \aux)$ where $s\in S_M$, $i$ is the index of the letter $w_i$ over the $M$'s head and $\aux$ is the auxiliary information needed for the simulation.
To simulate a transition of $M$, the following actions are performed by $\A$. If $M$ writes a word $v$ (a subword of the future query word) to the query tape, $\A$ stores $v$ in the finite memory (a part of $\aux$ component of its states) and checks whether the unprocessed part of its input begins with  $v$  (if not, the input word $p$ is rejected). If  $M$ performs a~query $\q$, $\A$ verifies that the unprocessed part of its input begins with  $\q\r$ and performs the transition that $M$ does after receiving $\r$ as a response. $\A$ accepts the input $p$ if it was not rejected during the simulation, $i = |w|+1$ (i.e., $M$'s head is over $\rendmarker$) and $M$ is in accepting state.

It follows from the construction 
that $\A$ accepts $p$ iff $p$ is the protocol of $M$ processing the input $w$. Note that this protocol is unique since $M$ is a $1\DBPA$. Also since $M$ is $1\DBPA$, $\A$ is log-space constructible. Finally, $L(\A) \cap \PROT \neq \es \iff w \in L(M)$, so $\MP1\DBPA \lelog \reg(\PROT)$.
\end{proof}

\section{Connection with Balloon Automata}\label{sect:BA}

We provide a high-level description of 
classes~$\CM_B$ of BAs.
The definition in a more formal style could be found in~\cite{BalloonHU67}.

\begin{definition}\label{def:BAclass}\definitionbodyfont 
	A subset of BAs $\CM_B$ is a \emph{class of BAs} if the following conditions hold. 	
	\begin{enumerate}[leftmargin=+.5in,label={(\Roman*)}]
	\item\label{def:BAclass:reset}  $\CM_B$ contains all automata
          with $\getbi$  such that, for each state~$s$,\linebreak
          $\updbs(s,i)$ is either $i$ for all $i$ or $\updbs(s,i) = j$ for all $i$ and some constant~$j$.
		\item\label{def:BAclass:combine}  If $\A, \B \in \CM_B$, $\updbs^\A, \getbi^\B $, $\updbs^\A, \updbs^\B$ are the corresponding functions of $\A$ and $\B$, then $\CM_B$ includes each automaton $\CC$ such that $\getbi^\CC$ and $\updbs^\CC$ are the functions that are obtained from the functions of $\A$ and $\B$ via finite control, i.e., for each state $s \in S_\CC$ $\getbi^\CC(s, i)$ equals to either $\getbi^\A(s,i)$ or $\getbi^\B(s,i)$ for all $i$, for each $i, j$ if $\updbs^\CC(i) \neq \updbs^\CC(j)$ then either $\updbs^\A(i) \neq \updbs^\A(j)$ or $\updbs^\B(i) \neq \updbs^\A(j)$.
	\end{enumerate}
\end{definition}	

Property~\ref{def:BAclass:reset} implies that $\CM_B$ contains automata that can reset
any state~$i$ of the balloon to the initial state~$1$ (or to some fixed state $j$ as well). Together with Property~\ref{def:BAclass:combine} it implies that the balloon has a reset operation that sets the balloon's state to the initial state. This property does not hold for SAs, so there is no direct correspondence between
classes of languages of BAs and automata with an ADS in the general case. 

\begin{theorem}
	For each ADS $\BP$ there exists a balloon $\mathrm B$ and a subset of BAs $\CM_B$ such that the corresponding  classes of languages coincide, i.e. $\CL(xy\BPA) = \CL(xy\mathrm{BA})$ and     Property~\ref{def:BAclass:combine} holds. If $\PROT$ has the reset operation~\ref{prot:axioms:reset}, Property~\ref{def:BAclass:reset} also holds, i.e $\CM_B$ is a class (in  terms of Definition~\ref{def:BAclass}).
\end{theorem}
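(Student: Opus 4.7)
The plan is to encode the state of the ADS $\BP$ inside a balloon and then close the family of BAs using this balloon under Property~\ref{def:BAclass:combine} to obtain $\CM_B$.

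\textbf{Balloon construction.} Fix a computable bijection $\phi\colon\PROT\times\Gamma_\wr^*\to\ZZ_{>0}$ with $\phi(\eps,\eps)=1$; a state $i=\phi(p,u)$ encodes ``protocol built so far is $p$, query tape contains $u$''. The information function $\getbi(\phi(p,u))$ returns, for each query $\q\in\Gamma_\query$, the unique response $\r$ with $pu\q\r\in\PROT$ (existence by axiom~\ref{prot:axioms:ext}, uniqueness by~\ref{prot:axioms:uniqr}); this is finite data. Equip $\mathrm B$ with a finite repertoire of updates addressable by the automaton state: the identity, a write-update $\phi(p,u)\mapsto\phi(p,ux)$ for each $x\in\Gamma_\wr$, and a query-update $\phi(p,u)\mapsto\phi(pu\q\r,\eps)$ for each $\q\in\Gamma_\query$, with $\r$ read off from $\getbi$.

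\textbf{Simulation $\BPA\Rightarrow\mathrm{BA}$.} A $\BPA$ $M$ is simulated by a BA $M'$ over balloon $\mathrm B$ whose states are $S_M$ augmented with finite auxiliary memory: each write-move~\eqref{eq:wrmove} of $M$ writing a word $x$ is unrolled into $|x|$ single-letter BA moves via $\eps$-transitions using the write-updates, and each query-move~\eqref{eq:qmove} becomes one BA move using the query-update, with the response read through $\getbi$. By induction on moves the invariant $i=\phi(p,u)$ is preserved, so $L(M')=L(M)$ and hence $\CL(xy\BPA)\subseteq\CL(xy\mathrm{BA})$.

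\textbf{Definition of $\CM_B$ and reverse inclusion.} Let $\CM_B$ be the smallest family of BAs containing those built on $\mathrm B$ and closed under Property~\ref{def:BAclass:combine}. A finite-control combination of $\mathrm B$-operations still assigns at each automaton state one of the primitive write/query/identity operations of $\BP$, so each $M\in\CM_B$ is simulated by a $\BPA$ that inherits $M$'s finite control and performs the corresponding $\BP$-operation at each step, giving $\CL(xy\mathrm{BA})\subseteq\CL(xy\BPA)$. Property~\ref{def:BAclass:combine} holds by construction.

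\textbf{Reset case and main obstacle.} Under axiom~\ref{prot:axioms:reset} with reset pair $(\q_0,\r_0)$, the query-update associated with $\q_0$ is the constant map $\phi(p,u)\mapsto\phi(\q_0\r_0,\eps)$; a constant update to any other $j=\phi(p_j,u_j)$ demanded by Property~\ref{def:BAclass:reset} is produced, inside the closure under~\ref{def:BAclass:combine}, by resetting and then hard-coding in the finite control a bounded rebuild sequence determined by $j$. The main delicate point I anticipate is selecting the repertoire of primitive updates of $\mathrm B$ so that it is simultaneously rich enough to absorb all finite-control combinations of Step~3 and restricted enough that every resulting balloon is still realizable by a $\BPA$; the restriction to balloons generated from $\mathrm B$, rather than from arbitrary BAs, is essential here.
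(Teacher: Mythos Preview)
Your proposal is correct and follows essentially the same approach as the paper: encode the pair (current protocol, query-tape content) as a balloon state, equip the balloon with a finite repertoire of primitive updates (single-letter write, query, identity), simulate $\BPA$ by BA via unrolling multi-letter writes, define $\CM_B$ as the Property~\ref{def:BAclass:combine}-closure of the primitive BAs, and handle the reset case by interpreting a constant $\updbs$ as reset followed by a finite hard-coded rebuild. The only cosmetic difference is your $\getbi$: you let it return the map $\q\mapsto\r$ determined by the current $(p,u)$, whereas the paper lets $\getbi$ return the last response recorded in $p$ (or $\bot$); both carry the same finite information and both require the same computability of $\r$ from $(p,u,\q)$, so the argument is unchanged.
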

\begin{proof}
  We begin with the construction of the balloon~$B$ and the bijection from $xy\BPA$ to $xy\mathrm{BA}$ such that $xy\mathrm{BA}$ form the set $\CM_B$
  satisfying Property~\ref{def:BAclass:combine}.
  A state of the balloon $B$ is an integer that is the encoding of pairs of words $(p, u)$, where $p$ is the current protocol, i.e., the protocol of all previous operations before the upcoming move, and $u$ is the word on the query-tape. We enumerate all $p \in \PROT$ and all $u\in \Sigma^*$ and use the standard enumeration of pairs of integers.
	
	Firstly, we define functions $\updbs$ and $\getbi$ for the BA $M^B_\PROT$ recognizing $\PROT$. Recall that for any language of correct protocols~$\PROT$ there exists $1\DBPA$ $M_\PROT$ recognizing $\PROT$ by Lemma~\ref{lemma:MPROT}. 
The function $\updbs$ simulates write operations and queries: it just updates the ballon's state according to the encoding. The function $\getbi : \ZZ_{>0} \to \Gamma_{\resp}\cup\{\bot\}$ returns responses or $\bot$ if there were no query. For an arbitrary $xy\BPA$ $M$ the $xy\BA$ $M^B$ is constructed as follows. The function $\getbi^M$ is the same as for $M^B_\PROT$ for any $M^B$. The function $\updbs^M$ is a modification of the function for $M^B_\PROT$ according to the finite state control of $M$. We define the class $\CF(\updbs)$ more formally below to show that Property~\ref{def:BAclass:combine} holds.

As the result, the states of the ballon~$B$ just encode the part of $xy\BPA$ that describes the data structures, and $\updbs$ and $\getbi$ simulate the work with the data structure defined by $\PROT$. So we provided the bijection between $xy\BPA$ and $xy\mathrm{BA}$.

We move to a formal definition of the class $\CF(\updbs)$.
At first, we define  $\CF_1(\updbs)$ satisfying Property~\ref{def:BAclass:combine}.
Assume that $M_\PROT$ writes at most one letter to the query tape per move and it also has a state $s_\eps$  in which it neither writes nor performs query. So, $\updbs(s_\eps, i) = i$ for all $i$ (hereinafter in $M_\PROT^B$). We mark a state $s$ as $s_a$ if $M_\PROT$ writes $a$ on the output tape and mark a state $s$ as $s_\q$ if $M$ performs query $\q$. From definition follows that $\updbs(s_m, i) = \updbs(s'_m, i)$ for all states $s$ and $s'$ marked by the same mark ($a$, $\eps$ or $\q$). So we define a function $\updbs^\PROT : (\Gamma_{\wr,\eps}\cup \Gamma_\query)\times \ZZ_{>0} \to \ZZ_{>0}$ so that $\updbs^\PROT(m, i) = \updbs(s_m, i)$. So for any $xy\mathrm{BA}$ $M$ the function $\updbs^M$ defined as follows. The states of $M$ are marked by symbols from $\Gamma_{\wr,\eps}\cup \Gamma_\query$ and $\updbs^M(s_m, i) = \updbs^\PROT(m, i)$. It is easy to see that from our definition of
$\CF_1(\updbs)$
follows the bijection between $xy\BPA$ and $xy\mathrm{BA}$ and Property~\ref{def:BAclass:combine} holds as well.

If $\PROT$ has the reset operation~\ref{prot:axioms:reset}, then we shall modify the interpretation of $B$ since $\CF(\updbs)$ does not satisfy our definition anymore. Firstly we describe the interpretation of $\updbs$ functions for $xy\mathrm{BA}$ $M^B$ from Property~\ref{def:BAclass:reset}. If $\updbs(s,i) = j$, $i$ and $j$ encode pairs $(p_i, u_i)$ and $(p_j, u_j)$ respectively and $(p_j, u_j)$ is not obtained from $(p_i, u_i)$ by a single move of $M^B_\PROT$, then we interpretate the state change $i\to j$ as follows. The corresponding to $M^B$ $xy\BPA$ $M$ performs the reset operation and then performs sequence of queries that move the configuration from $(\eps, \eps)$ to $(p_j, u_j)$ during $\eps$ moves. Note that by the definition of Property~\ref{def:BAclass:reset}  $M^B$ has finitely many $j$'s in the range of $\updbs$ so $M$ is well-defined. Denote $\updbs$ functions for automata from Property~\ref{def:BAclass:reset} as $\CF_I(\updbs)$. So $\CF(\updbs)$ is a closure of $\CF_I(\updbs)$ and $\CF_1(\updbs)$ in terms of Property~\ref{def:BAclass:combine}. From Property~\ref{def:BAclass:combine} and our construction of $xy\BPA$'s for Property~\ref{def:BAclass:reset} follows the construction of $xy\BPA$ for any of $xy\BA$ from the closure in terms of Property~\ref{def:BAclass:combine}. So, we have proved the second part of the theorem.
\end{proof}

So all the results from~\cite{BalloonHU67} that do not rely on~\ref{def:BAclass:reset} hold for $B_\PROT$-automata. We are most interested in~\eqref{eq:HUresults} and its complexity analogue~\eqref{eq:HUresultsLog}. Many structural results from~\cite{BalloonHU67}
follow from the fact that $\CL(1\NBPA)$ is a principal cone (Theorem~\ref{th:BPisPCone}), namely, closure
of $\CL(1\NBPA)$ over union and rational transductions\footnote{Intersection and quotient with regular languages, gsm forward mapping are the partial cases of rational transductions.}. We shall also mention the closure over gsm inverse mappings proved in~\cite{BalloonHU67} for all $xy\BA$ that implies the same closure for all $xy\BPA$.

\begin{lemma}\label{lemma:ResetOpClosure}
	If $\BP$ contains the reset operation then $\CL(1\NBPA)$ is closed over concatenation and iteration.
\end{lemma}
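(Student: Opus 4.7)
The plan is to construct the required $1\NBPA$s by direct automaton construction, gluing together existing automata by means of the reset query. Fix $\q \in \Gamma_{\query}$ and $\r \in \Gamma_{\resp}$ as in axiom~\ref{prot:axioms:reset}. The key observation is that by axioms~\ref{prot:axioms:reset} and~\ref{prot:axioms:uniqr}, once the protocol so far is some $p_1 \in \PROT$, performing the query $\q$ is guaranteed to yield response $\r$, and then the automaton may continue with \emph{any} $p_2 \in \PROT$, with the responses actually observed during that continuation forced to equal those listed in $p_2$. In other words, the query $\q$ effectively reinitialises the data structure.

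For closure under concatenation, given $1\NBPA$s $M_1,M_2$ recognising $L_1,L_2$, I would take the disjoint union of their state sets plus a fresh state $s_{\text{reset}} \in S_\query$, use $s_0^1$ as initial state and $F_2$ as the accepting set, keep all original transitions, add $\eps$-write moves from each $s \in F_1$ (which we may assume lie in $S_\wr$ by inserting a dummy write state) into $s_{\text{reset}}$, and finally add the query transition $(s_{\text{reset}},\q,\r,s_0^2)$. An accepting run then factors as a run of $M_1$ on some prefix $w_1$ producing protocol $p_1$, followed by the reset query, followed by a run of $M_2$ on the suffix $w_2$ producing protocol $p_2$, with total protocol $p_1\q\r p_2 \in \PROT$. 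Closure under iteration is obtained analogously: introduce a fresh initial state $s_0'$ with an $\eps$-move to $s_0$, and from every $s \in F$ add an $\eps$-move to a fresh $s_r \in S_\query$ together with $(s_r,\q,\r,s_0)$, declaring $F\cup\{s_0'\}$ to be the accepting set so that $\eps$ and all longer concatenations are captured. Iterating the reset argument $k$ times produces protocols of the form $p_1\q\r p_2\q\r \cdots \q\r p_k$, each of which lies in $\PROT$ by repeated application of axiom~\ref{prot:axioms:reset}.

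The only subtle point is verifying that during the simulation of $M_2$, or of successive iterations of $M$, the responses actually received coincide with those listed in the protocols $p_i$ produced by fresh runs. This will follow because both $p_i \in \PROT$ and the extended word belong to $\PROT$, so axiom~\ref{prot:axioms:uniqr} forces the response to each query to be determined by its query prefix alone, and the two values must coincide. The converse inclusion---that every accepting run of the new automaton corresponds to a factorisation of the input into words of $L_1,L_2$ (respectively $L$)---is immediate from the construction, since the only way to pass from $F_1$ (respectively $F$) to the continuation is through the reset state. I do not expect any serious obstacle; the argument is a routine automaton construction, and the real content is the recognition that axiom~\ref{prot:axioms:reset} is exactly what is needed to glue independent protocols into a single valid one.
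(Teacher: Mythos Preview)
Your proposal is correct and follows essentially the same approach as the paper: both build the concatenation and iteration automata directly, nondeterministically guessing the split points and invoking the reset query between successive simulations so that the resulting protocol has the form $p_1\q\r p_2\q\r\cdots$, which lies in $\PROT$ by axiom~\ref{prot:axioms:reset}. Your write-up is in fact more detailed than the paper's (which is a two-sentence sketch), in particular your explicit appeal to axiom~\ref{prot:axioms:uniqr} to justify that the observed responses during each simulated sub-run agree with the protocol $p_i$ is a point the paper leaves implicit.
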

\begin{proof}
	We construct $1\NBPA$'s $M^2$ and $M^*$ by $1\NBPA$ $M$ recognizing $L(M)\cdot L(M)$ and $L(M)^*$ respectively in a straight-forward way. $M^2$ simulates $M$ and nondeterministically guess the split of the input $uv$ such that $u, v \in L(M)$ at the end of the $u$. If after processing of $u$, $M$ is in an accepting state, $M^2$ performs the reset operation and simulates $M$ on $v$. $M^*$ guesses the split of the input into $u_1u_2\ldots u_m, u_i \in L(M)$ and acts in the similar way.
\end{proof}

The standard technique from~\cite{BerstelBook} implies the following lemma. 

\begin{lemma}
	If $\PROT\#\PROT \lerat \PROT$, $\# \not \in \Gamma$, then $\CL(1\NBPA)$ is closed over concatenation.
	If $(\PROT\#)^* \lerat \PROT$, $\# \not \in \Gamma$, then $\CL(1\NBPA)$ is closed over iteration.
\end{lemma}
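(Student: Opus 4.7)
The plan is to use Theorem~\ref{th:BPisPCone}, which identifies $\CL(1\NBPA)$ with the principal rational cone $\T(\PROT)$; the two assertions then reduce to showing closure of $\T(\PROT)$ under concatenation (resp.\ iteration) under the stated hypotheses. The key idea is that a rational transduction can freely use the fresh separator $\#$ to stitch together, or iterate, component transductions on the factors of its input. Combined with the assumed dominance $\PROT\#\PROT \lerat \PROT$ (resp.\ $(\PROT\#)^*\lerat\PROT$) and the transitivity of $\lerat$ supplied by Lemma~\ref{lemma:EM}, this yields the desired closure immediately.

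For the concatenation case, I would take $L_1, L_2 \in \T(\PROT)$ and fix FSTs $T_1, T_2$ with $L_i = T_i(\PROT)$. I then build an FST $T$ whose state set is the disjoint union of those of $T_1$ and $T_2$: the initial state is that of $T_1$, the final states are those of $T_2$, all original transitions of $T_1$ and $T_2$ are retained, and from every final state of $T_1$ a fresh transition reads $\#$, outputs $\eps$, and enters the initial state of $T_2$. Since $\#\notin\Gamma$, the separator cannot occur inside any individual protocol, so on any input of the form $x\#y$ the switch from the $T_1$-part to the $T_2$-part is forced; consequently $T(x\#y) = T_1(x)\cdot T_2(y)$, hence $T(\PROT\#\PROT) = L_1L_2$. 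Taking an FST $T'$ with $T'(\PROT)=\PROT\#\PROT$ and composing by Lemma~\ref{lemma:EM} gives $L_1L_2 = (T\circ T')(\PROT) \in \T(\PROT)$.

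For iteration, I would argue analogously. Given $L=T(\PROT)$, I construct an FST $T^\star$ by adding, from every final state of $T$, a transition that reads $\#$, outputs $\eps$, and returns to the initial state of $T$, and by also declaring the initial state final so that $\eps\in T^\star(\eps)$. A straightforward induction on the number of $\#$-separators in the input shows $T^\star\bigl((\PROT\#)^*\bigr) = L^*$. Composing with the FST that realizes $(\PROT\#)^*\lerat\PROT$ and applying Lemma~\ref{lemma:EM} once more yields $L^*\in\T(\PROT)=\CL(1\NBPA)$.

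I do not foresee any substantial obstacle: the only point requiring care is to verify that the added $\#$-transitions do not generate spurious outputs, but because $\#\notin\Gamma$ the separator can appear only at the splicing points, so runs of the new FST on a glued input are in bijection with tuples of accepting runs of the originals on the pieces. The entire argument is the textbook schema for principal rational cones (cf.\ \cite{BerstelBook}) and parallels the direct $1\NBPA$-level proof already used in Lemma~\ref{lemma:ResetOpClosure}.
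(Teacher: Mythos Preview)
Your proposal is correct and follows essentially the same route as the paper's proof idea: both exploit Theorem~\ref{th:BPisPCone} to work at the level of the principal cone $\T(\PROT)$, build an FST that uses the fresh separator $\#$ to split its input and simulate the component transducers on the pieces, and then compose with the assumed dominance to land back in $\T(\PROT)$. Your write-up simply fleshes out the construction (handling two distinct languages in the concatenation case and noting the care needed so that $\#$ appears only at splice points), which the paper leaves implicit.
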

\begin{proof}[Proof idea] 
	The construction is similar to the one from the proof of Lemma~\ref{lemma:ResetOpClosure}. FSTs $T_{L^2}$ and $T_{L^*}$ uses marks $\#$ to split the input and simulate an FST $T_L$ corresponding to the language $L$, i.e., $L = T_L(\PROT)$.
\end{proof}

\begin{remark}\remarkbodyfont
	We leave open the question of the reduction in the opposite direction. I.e., does for each class of BAs exist a language of correct protocols~$\PROT$ such that BAs recognize the same class of languages as $\BP$ automata?  The essence of the problem is as follows. If axioms (I-II) for the class of BAs are satisfied, does it imply that there exists a ``universal'' BA~$M_U$ such that  $\CL(1\NBA) = \T(L(M_U))$ and for each $M \in 1\NBA$ there exists an FST $T$ such that $L(M) = L(M_U\circ T)$? %If not 
\end{remark}

\section{RR Problems and \boldmath$\logTM$ Models}\label{sect:Models}

\subsection{\boldmath$\mathrm{A}_F\logTM$ models}

In~\cite{Vya09CSR,VyaPIT11} it was shown that $\reg(\Pref(F\Lambda^*))$ is a complete problem in the class 
of languages recognizable by GNA with advices from~$F$ (this model corresponds to $\DA_{F}\logTM$, $\Pref(L)$ is the set of all prefixes of $L$). We prove that $\reg(\Pref(F\Lambda^*))$ and $\reg(F)$ are complete problems in the class $\CL(\DA_{F}\logTM)$. We begin with the proof of similar result for $\nreg(F)$ and $\CL(\NA_{F}\logTM)$. %, we provide here the general proof.
We introduce the following auxiliary lemma for the sake of the proof.

\begin{lemma}[\!\!\cite{RVRR2015DCFS}]\label{lemma:leratRR}
$F_1 \lerat F_2 \Rightarrow \nreg(F_1) \lelog \nreg(F_2)$.
\end{lemma}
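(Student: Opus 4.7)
The plan is a direct reduction using the composition constructions already provided by Lemma~\ref{lemma:EM}. Fix an FST $T$ witnessing $F_1 \lerat F_2$, so that $F_1 = T(F_2)$. Given an input NFA $\A$ for $\nreg(F_1)$, I want to produce an NFA $\A'$ whose language is such that $L(\A') \cap F_2 \neq \es$ holds iff $L(\A) \cap F_1 \neq \es$.

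The key semantic observation is unwinding the definitions: $L(\A)\cap F_1 \neq \es$ iff there exist $w\in L(\A)$ and $v\in F_2$ with $vTw$, and this is equivalent to $T^{-1}(L(\A))\cap F_2 \neq \es$. So the natural candidate for $\A'$ is the NFA recognizing $T^{-1}(L(\A))$, which is exactly the automaton $\A \circ T$ furnished by Lemma~\ref{lemma:EM}.

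The reduction is then: on input $\A$, output $\A' := \A\circ T$ as an instance of $\nreg(F_2)$. Since $T$ depends only on the pair $(F_1,F_2)$ and not on the input, it can be hard-wired into the reduction; the construction of $\A\circ T$ from $\A$ is log-space computable by Lemma~\ref{lemma:EM}. Combined with the equivalence above, we get $\A \in \nreg(F_1) \iff \A\circ T \in \nreg(F_2)$, proving $\nreg(F_1)\lelog \nreg(F_2)$.

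I do not foresee any obstacle: the statement is essentially a corollary of Lemma~\ref{lemma:EM} together with the definition of $\lerat$, and the only thing one needs is the trivial semantic verification that $T^{-1}(L(\A))\cap F_2 = \es$ iff $L(\A)\cap T(F_2) = \es$.
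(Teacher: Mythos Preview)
Your argument is correct and is precisely the standard one: fix an FST $T$ with $F_1=T(F_2)$, map an input NFA $\A$ to $\A'=\A\circ T$ via Lemma~\ref{lemma:EM}, and use the elementary identity $L(\A)\cap T(F_2)\neq\es \iff T^{-1}(L(\A))\cap F_2\neq\es$. Note that the paper does not supply its own proof of this lemma; it is imported from~\cite{RVRR2015DCFS}, so there is nothing further to compare against.
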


\begin{lemma}\label{lemma:NAFlogTMtoNreg}
	 $\CL(\NA_F\logTM) \lelog \nreg(F)$.
\end{lemma}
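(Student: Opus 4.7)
The plan is a direct configuration-graph reduction. Let $L \in \CL(\NA_F\logTM)$ be recognized by some $\NA_F\logTM$~$M$. For each input $x$, I will construct in log-space an NFA~$\A_x$ over the advice alphabet whose language is exactly the set of advice strings that make $M$ accept $x$. Then $x \in L$ iff there exists $y\in F$ such that $M$ accepts $x$ with advice $y$, iff $L(\A_x)\cap F \neq \es$, iff $\A_x \in \nreg(F)$, giving the desired log-space reduction $L \lelog \nreg(F)$. Since this holds for every $L \in \CL(\NA_F\logTM)$, the claim follows.

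The NFA $\A_x$ is simply the configuration graph of $M$ on $x$, viewed with advice symbols as the alphabet. Its states are the configurations of the $\logTM$-part of $M$ on input $x$ (control state, work-tape contents, input-head and work-head positions); there are only polynomially many such configurations in $|x|$, each encodable in $O(\log |x|)$ bits. For each local move of $M$ taking a configuration $c$ to $c'$ I add a transition $c \xrightarrow{a} c'$ labelled by the advice symbol $a$ read during that move, and $c \xrightarrow{\eps} c'$ if the move does not consume an advice symbol. The initial state is $c_0(x)$ and the accepting states are the accepting configurations of $M$ on $x$. Enumerating configurations and checking local transitions of a $\logTM$ is the textbook log-space computation, so $\A_x$ is log-space constructible from $x$.

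Correctness will then follow directly from matching runs of $\A_x$ with computations of $M$: an accepting path in $\A_x$ reading $y$ is, by construction, exactly an accepting computation $(c_0(x),y)\ders(c_f,\eps)$ of $M$ with $y$ as advice, and conversely. Hence $L(\A_x) = \{y : M \text{ accepts } x \text{ with advice } y\}$, and the equivalence chained above closes the reduction.

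The only bookkeeping issue worth flagging is the padding $\Lambda^\infty$ that follows the advice word $y$, combined with the convention in Definition~\ref{def:NAFlogTM} that an accepting configuration has $u = \eps$, meaning the actual advice word must be fully consumed. I will handle this by declaring as accepting only those states of $\A_x$ that correspond to configurations $M$ can enter while the advice head has just exhausted a meaningful (non-$\Lambda$) symbol, or equivalently by distinguishing $\Lambda$-transitions and choosing the accepting set so that $\A_x$ ignores any tail of $\Lambda$-moves. Either interpretation keeps the construction log-space and preserves the equivalence above, so I do not expect a real obstacle here; the entire argument is conceptually the same as the standard log-space-to-regular-intersection reduction used in~\cite{Vya09CSR,VyaPIT11} for the deterministic case.
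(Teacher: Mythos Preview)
Your proposal is correct and follows essentially the same configuration-graph construction as the paper: build an NFA whose states are the $\logTM$-configurations of $M$ on $x$ and whose transitions are labelled by advice symbols (with $\eps$ for moves that do not advance the advice head).

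The one point where you diverge from the paper is the handling of the trailing $\Lambda$s. The paper does not try to absorb them into the NFA construction; it first obtains an NFA $\A$ with $x\in L(M)\iff \exists y\in F,\ k\ge 0:\ y\Lambda^k\in L(\A)$, i.e.\ $\A\in\nreg(F\Lambda^*)$, and then invokes Lemma~\ref{lemma:leratRR} together with $F\simrat F\Lambda^*$ to pass to $\nreg(F)$. Your direct fix also works, but your first option (``accepting only states entered when the advice head has just exhausted a non-$\Lambda$ symbol'') is not quite right as stated: $M$ may need further moves, including moves that read $\Lambda$, before reaching an accepting state. The correct version of your idea is to take the $\Lambda$-and-$\eps$ closure of the accepting configurations (equivalently, add a ``have-seen-$\Lambda$'' flag bit), which is precisely what the paper does in the deterministic analogue, Lemma~\ref{lemma:DAFlogTMtoDreg}. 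Either route---your closure fix or the paper's detour through $\nreg(F\Lambda^*)$---is log-space and yields the same reduction.
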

\begin{proof}
	An $\NA_F\logTM$ $M$ takes on the input a word $x$ and also takes $y\in F$ on the advice tape. $x \in L(M) \iff \exists y \in F : M(x, y) = 1$, where $M(x, y) = 1$ if $M$ halts in an accepting state, $M(x, y) = 0$ if $M$ halts in a rejecting state.
	
	A surface-configuration of a $\NA_F\logTM$ is a tuple $(q, \mem, i, j)$ of the state $q$, log-space memory configuration $\mem$ and the positions $i$ of the head on the input tape and $j$ of the head on the advice (one-way) tape. The tuples $(q, \mem, i)$ are the states of finite automata $\A$ on the input of $\nreg(F)$ problem constructed by $M$ and $x$. The initial state is
        $(q_0, \es, 0)$, where $q_0$ is the initial state of the $\NA_F\logTM$,
        accepting states are states of the form $(q_f, \mem, i)$ where $q_f$ is an accepting state of the $\NA_F\logTM$. The transitions between the states are determined by letters of $y$, i.e. $(q, \mem, i, j) \der (q', \mem', i', j')$, $i' \in \{i-1, i, i+1\}$, $j' \in \{j, j+1\}$ if $(q', \mem', i') \in \delta_\A((q, \mem, i, j), y_j)$. The list of the $\A$'s transitions $\delta_\A$ is log-space computable so as the set of the $\A$'s states as well.

	Without loss of generality, we assume that $M$ always processes $y$ till the end, i.e. till mits $\Lambda$ on the advice tape. So, by the  construction of $\A$, we obtain 
\begin{align*}
 x \in L(M) & \iff  \exists y \in F: M(x,y) \iff \exists y \in F, k \geq 0: y\Lambda^k \in L(\A) \iff\\
 &\iff \A \in \nreg(F\Lambda^*)  \stackrel{\text{Lemma~\ref{lemma:leratRR}}}{\iff} \A' \in \nreg(F),
\end{align*}
where $\A'$ is constructed by $\A$ due to the reduction in Lemma~\ref{lemma:leratRR}. Since $M$ is fixed, we get that  
$$x \stackrel{?}{\in}L(M) \lelog \A \stackrel{?}{\in} \nreg(F\Lambda^*) \lelog \A' \stackrel{?}{\in} \nreg(F),$$
and we obtain that $\CL(\NA_F\logTM) \lelog \nreg(F)$ by the transitivity of the log-space reductions.  
\end{proof}	

\begin{remark}\remarkbodyfont
	Note that $F \simrat F\Lambda^* \simrat \Pref(F\Lambda^*)$ so the $\nreg$ problems for these filters are equivalent (up to $\lelog$ reductions). The equivalence holds since nondeterministic FST's can have several images for the same word, particularly write many $\Lambda$'s at the end of the word. It does not hold for deterministic FSTs, so $F\Lambda^* \xcancel{\leq}_{\mathrm{drat}} F$. So to obtain the corresponding lemma for $\DA_F\logTM$ we need to modify the proof of Lemma~\ref{lemma:NAFlogTMtoNreg}. 
\end{remark}

\begin{lemma}\label{lemma:DAFlogTMtoDreg}
	 $\CL(\DA_F\logTM) \lelog \reg(F)$.
\end{lemma}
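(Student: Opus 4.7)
The plan is to adapt the construction from the proof of Lemma~\ref{lemma:NAFlogTMtoNreg} but to produce a DFA rather than an NFA, and, crucially, to arrange matters so that the reduction targets $\reg(F)$ directly rather than $\reg(F\Lambda^*)$. The obstacle flagged by the preceding remark is that the equivalence $F \simrat F\Lambda^*$ does not survive under deterministic rational transductions, so we cannot first reduce to $\reg(F\Lambda^*)$ and then strip the trailing $\Lambda$'s via a DFST. The trailing $\Lambda$'s must instead be absorbed into the structure of the DFA itself.

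First I would take as the state set of $\A$ the surface-configurations $(q,\mem,i)$ of the $\logTM$-part of $M$ on input $x$, together with a sink state $s_\rej$. Since $M$ may make moves that do not advance the advice head (and in particular may perform several such moves in a row), the DFA $\A$ cannot treat one step of $M$ as one step of $\A$. Instead, on an input letter $\sigma$, the transition from state $s$ is computed by simulating $M$'s deterministic moves starting at $s$ that do not advance the advice head, for at most $|S_\A|$ steps, and then applying one move that reads $\sigma$; if the chain of non-advice-reading moves exceeds the bound, the simulation is looping and we set the transition to $s_\rej$. This is well-defined and log-space computable because the number of surface-configurations is polynomial in~$|x|$.

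Next I would define the accepting states of $\A$ to be those surface-configurations $s$ from which $M$ eventually reaches an accepting configuration while reading only $\Lambda$'s on the advice tape. By the same polynomial bound on $|S_\A|$, the predicate ``$s$ is accepting'' is decidable in log space: simulate $M$ from $s$ feeding $\Lambda$'s until either an accepting configuration is reached, a rejecting configuration is reached, or the number of steps exceeds $|S_\A|+1$ (in which case the computation is looping and $s$ is not accepting). In this way the DFA incorporates the trailing $\Lambda^k$ of the nondeterministic proof into its acceptance predicate, and requires no additional input padding.

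Putting the pieces together, a direct induction on the length of the prefix of $y$ processed shows that $\A$ accepts~$y$ iff the deterministic computation of $M$ on input $x$ with advice~$y\Lambda^\infty$ accepts. Consequently
\[
 x \in L(M) \iff \exists y \in F : y \in L(\A) \iff L(\A) \cap F \neq \es \iff \A \in \reg(F).
\]
Both the transition function and the acceptance predicate of $\A$ are log-space computable from $(M,x)$, so the map $x \mapsto \A$ is a log-space reduction, yielding $\CL(\DA_F\logTM) \lelog \reg(F)$. The only technically delicate point is keeping both simulations (the $\epsilon$-closure used for transitions and the $\Lambda$-feeding used for acceptance) within the log-space budget, which is handled by the polynomial bound on the surface-configuration space.
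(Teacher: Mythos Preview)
Your proof is correct and rests on the same key idea as the paper's: absorb the trailing $\Lambda$'s into the acceptance condition of the DFA rather than into a separate rational transduction. The implementations differ, however. The paper reuses the construction of Lemma~\ref{lemma:NAFlogTMtoNreg} verbatim to obtain a DFA $\A$ (observing that determinism of $M$ makes $\A$ deterministic), and then post-processes it: an auxiliary DFA $\A''$ with states $(q,\bcancel\Lambda)$ and $(q,\Lambda)$ tracks whether a $\Lambda$ has been read; the final automaton $\A'$ is $\A$ with all $\Lambda$-transitions removed, where a state $q$ is declared accepting iff $(q,\bcancel\Lambda)$ has a $\Lambda$-only path in $\A''$ to some accepting $(q_f,\Lambda)$. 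Your construction instead bakes both the $\eps$-closure for non-advancing moves and the $\Lambda$-feeding for acceptance directly into the simulation of $M$. The paper's route is more modular (it literally repeats Lemma~\ref{lemma:NAFlogTMtoNreg} and then patches the $\Lambda$ issue at the automaton level), while yours is more self-contained and makes explicit the handling of moves that do not advance the advice head, a point the paper leaves implicit. Both achieve the log-space bound for the same reason: the surface-configuration space is polynomial in $|x|$.
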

\begin{proof}
	We repeat the steps of the proof of Lemma~\ref{lemma:NAFlogTMtoNreg}. Note that $\A$ is a DFA, since $M$ is a deterministic machine. To construct $\A'$ we construct an auxiliary DFA $\A''$ by $\A$ as follows.  $\A''$ has the states $(q, \bcancel{\Lambda})$ and $(q, \Lambda)$ for each state $q$ of $\A$. The auxiliary bit of a state indicates whether $\A''$ met $\Lambda$. So for each transition $q \xrightarrow{\Lambda} p$ of $\A$ there are two corresponding transitions $(q, b) \xrightarrow{\Lambda} (p, \Lambda)$, $b \in \{\Lambda, \bcancel{\Lambda}\}$. For states $(q, \Lambda)$ $\A'$ has only transitions by $\Lambda$. For the transitions $q \xrightarrow{a} p$, $a \neq \Lambda$, $\A''$ has transitions $(q, \bcancel{\Lambda}) \xrightarrow{a} (p, \bcancel{\Lambda})$. A state $(q, b)$ is an accepting if $q$ is an accepting state of $\A$.
	
Now we construct $\A'$. It is obtained from $\A$ by removing all $\Lambda$-transitions. Each $\A$'s accepting state is an accepting for $\A'$ and $\A'$ also has accepting states determined as follows. If $(q, \bcancel{\Lambda})$ has $\Lambda$-path to an accepting state $(q_f, \Lambda)$ in $\A''$, then $q$ is an accepting state in $\A'$. It is easy to see that $y \in L(\A') \iff \exists k \geq 0\; y\Lambda^k \in L(\A)$ and $\A''$ is log-space computable from $\A$ and $\A'$ is log-space computable from $\A$ and~$\A''$.
\end{proof}

\begin{lemma}\label{lemma:NregToNAFlogTM}
  $\nreg(F) \lelog \CL(\NA_F\logTM)$\hspace*{-0.15pt} and $\reg(F) \lelog \CL(\DA_F\logTM)$.
  Moreover, there exist an $\NA_F\logTM$ $M_\nreg$ that recognizes the problem~$\nreg(F)$ and $M_\reg$ that recognizes $\reg(F)$ as well. 
\end{lemma}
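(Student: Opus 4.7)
\begin{proofIdea}
The plan is to observe that the reductions are immediate once we exhibit universal machines $M_\nreg$ and $M_\reg$ that recognize the stated problems: indeed, $\nreg(F)\in\CL(\NA_F\logTM)$ reduces via the identity map to itself, and similarly for $\reg(F)$. So the whole content of the lemma is the construction of $M_\nreg$ and $M_\reg$.

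The first step is to describe $M_\nreg$. On input $\langle\A\rangle$, where $\A$ is an NFA, it interprets the advice tape as containing a candidate word $y\in F$ (this is guaranteed by the semantics of the $\NA_F\logTM$ model). It then simulates $\A$ on $y$ in a streaming fashion: the current state of $\A$ is stored in the log-space work tape (using $O(\log|\A|)$ bits, since $\A$ is written on the input tape); after reading each symbol $y_j$ of the advice, it nondeterministically guesses a successor state consistent with the transition relation of $\A$ by scanning the description on the input tape. Once the advice-tape head reaches the blank symbol $\Lambda$, the machine accepts iff the stored current state is a final state of $\A$.

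The second step is correctness: $M_\nreg$ accepts $\langle\A\rangle$ iff there exist $y\in F$ and an accepting run of $\A$ on $y$, which is precisely the condition $L(\A)\cap F\neq\es$, i.e., $\langle\A\rangle\in\nreg(F)$. The construction of $M_\reg$ is identical except that $\A$ is a DFA, so the transition lookup yields a unique successor, and no nondeterministic choice in the TM is required; hence $M_\reg$ is a $\DA_F\logTM$.

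The only minor obstacle is the bookkeeping of the simulation in log space, and this is routine: one pointer into $\A$'s description locates the current state, a second pointer scans the transition list on each move, and the advice head advances monotonically. No auxiliary structural property of $F$ is needed, since the validity of $y\in F$ is provided by the advice mechanism itself.
\end{proofIdea}
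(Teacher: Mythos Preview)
Your proposal is correct and follows essentially the same approach as the paper: construct $M_\nreg$ (respectively $M_\reg$) to simulate the input NFA (respectively DFA) $\A$ on the advice word $y\in F$, guessing the run nondeterministically in the NFA case, and conclude that membership of $\A$ in $L(M_\nreg)$ is equivalent to $L(\A)\cap F\neq\es$. Your version is more explicit about the log-space bookkeeping, but the underlying construction and correctness argument are identical to the paper's.
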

\begin{proof}
  The proof is straightforward. $M_\nreg$ gets on the input an NFA $\A$ and verifies whether $\A$ accepts  the word $y \in F$ written on the advice tape. If $y \in L(\A)$, $M_\nreg$ nondeterministically guesses the $\A$'s run on $y$. So, by Definition~\ref{def:NAFlogTM}, $\A \in L(M_\nreg)$ iff $\exists y \in F : y\in L(\A) \iff \A \in \nreg(F)$.
  
  The construction for $M_\reg$ is the same.
\end{proof}

\begin{theorem}\label{th:CLsRRsummary}
  \[
  \begin{aligned}
  &\CL(\NA_F\logTM) = \{ L \mid L \lelog \nreg(F) \},\\
  &\CL(\DA_F\logTM) = \{ L \mid L \lelog \reg(F) \}.    
  \end{aligned}
  \]
\end{theorem}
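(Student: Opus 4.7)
The equalities are two-sided inclusions. The direction $\CL(\NA_F\logTM) \subseteq \{ L \mid L \lelog \nreg(F)\}$ is immediate from Lemma~\ref{lemma:NAFlogTMtoNreg}: every $L \in \CL(\NA_F\logTM)$ is witnessed by a fixed $\NA_F\logTM$, to which the reduction of that lemma applies and yields $L \lelog \nreg(F)$. Dually, $\CL(\DA_F\logTM) \subseteq \{ L \mid L \lelog \reg(F)\}$ follows from Lemma~\ref{lemma:DAFlogTMtoDreg}.

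For the reverse inclusions I would prove a closure property: the classes $\CL(\NA_F\logTM)$ and $\CL(\DA_F\logTM)$ are closed under $\lelog$-reductions. Combined with Lemma~\ref{lemma:NregToNAFlogTM}, which exhibits machines $M_{\nreg} \in \NA_F\logTM$ and $M_{\reg}\in \DA_F\logTM$ recognizing $\nreg(F)$ and $\reg(F)$ respectively, this gives: if $L \lelog \nreg(F)$, then $L \lelog L(M_{\nreg}) \in \CL(\NA_F\logTM)$, hence $L \in \CL(\NA_F\logTM)$; similarly for the deterministic case.

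The closure under $\lelog$ is the standard log-space composition argument. Suppose $L \lelog L'$ via a log-space computable function $f$, and $L' = L(M')$ for some $M' \in \NA_F\logTM$ (resp.\ $\DA_F\logTM$). I construct $M$ that on input $x$ with advice $y$ simulates $M'$ on $(f(x), y)$, passing the advice tape through unchanged. Because $M'$ has only one-way access to its advice and random access to its input tape, but $M$ cannot afford to store $f(x)$ explicitly, $M$ maintains on its work tape the current head position $i$ of $M'$ on its input tape; whenever $M'$ reads input position $i$, $M$ invokes the log-space transducer for $f$ on input $x$ from scratch, discarding all but the $i$-th output symbol. This requires $O(\log |x|)$ additional space, so $M$ is itself a $\logTM$ with access to the same advice in $F$; it is deterministic whenever $M'$ is deterministic. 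Thus $L = L(M) \in \CL(\NA_F\logTM)$ (resp.\ $\CL(\DA_F\logTM)$).

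The only subtlety, and the step that deserves the most care, is that $M'$ is allowed random access to $f(x)$ while the reduction $f$ is produced sequentially; this is resolved by the standard ``recompute on demand'' trick above, whose correctness relies only on $f$ being log-space computable and the simulation preserving the identity of the advice tape. Both determinism and nondeterminism are preserved by the construction, completing both equalities simultaneously.
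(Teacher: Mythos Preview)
Your proposal is correct and follows essentially the same approach as the paper: both directions use the same lemmas (Lemma~\ref{lemma:NAFlogTMtoNreg}/\ref{lemma:DAFlogTMtoDreg} for one inclusion, Lemma~\ref{lemma:NregToNAFlogTM} plus log-space composition for the other). The paper is terser, simply writing ``construct an $\NA_F\logTM$ $M$ recognizing $L$ via the composition of $T$ and $M_\nreg$'', whereas you spell out the recompute-on-demand mechanism and frame it as closure under $\lelog$; this is the same argument with more detail.
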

\begin{proof}
	By the definition of $\lelog$, $L \lelog \nreg(F)$ iff there exists a $\logTM$ transducer~$T$ 
	that maps the input $x$ of the problem $x \stackrel{?}{\in} L$ to the input $T(x)$ of the problem 	$\nreg(F)$. We construct an $\NA_F\logTM$ $M$ recognizing $L$ via the composition of $T$ and $\NA_F\logTM$ $M_\nreg$ from Lemma~\ref{lemma:NregToNAFlogTM}. 
	
	So $\{ L \mid L \lelog \nreg(F) \} \subseteq \CL(\NA_F\logTM)$; the opposite inclusion follows from Lemma~\ref{lemma:NAFlogTMtoNreg}. We repeat the same arguments for the deterministic case and apply Lemma~\ref{lemma:DAFlogTMtoDreg} for the opposite inclusion.
\end{proof}

\subsection{\boldmath$\BP\logTM$ models}

In Section~\ref{sect:Cones} we exploited the following idea. In the case of a nondeterministic model,  performing queries one by one and proceeding the computation depending on the queries' results computationally equivalent to guessing all the queries results and verifying whether all the results were correct in the end (by testing whether obtained protocol was correct). In fact, this idea works even in the case of a deterministic model in Theorem~\ref{th:MP1DPBAtoDRR}. Now we exploit it again for $\logTM$-based models.

\begin{lemma}\label{lemma:AandBPlogTMequiv}
  \[
	\MP\NA_\PROT\logTM \sim_{\log} \MP\NBP\logTM\ \text{and}\
	\MP\DBP\logTM \lelog \MP\DA_\PROT\logTM.
  \]  
\end{lemma}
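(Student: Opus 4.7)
\begin{proofIdea}
The plan is to exploit the same duality between performing queries on-line and producing a protocol off-line that was used in Theorems~\ref{th:RReqNonEmp} and~\ref{th:MP1DPBAtoDRR}. For both directions it suffices to give a log-space transformation of machines (keeping the input $w$ unchanged).

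For $\MP\NBP\logTM \lelog \MP\NA_\PROT\logTM$ and for the deterministic variant $\MP\DBP\logTM \lelog \MP\DA_\PROT\logTM$, I would, given a (non)deterministic machine $M$ with ADS, construct a machine $M'$ that expects on its advice tape a protocol $p \in \PROT$. $M'$ runs $M$ step by step; whenever $M$ writes a letter on its query tape, $M'$ checks that the next unread letter of the advice is the same symbol from $\Gamma_{\wr}$; whenever $M$ performs a query $\q$, $M'$ checks that the next advice letter is $\q$ and then uses the following advice letter $\r \in \Gamma_{\resp}$ as the response returned to $M$. Axioms~\ref{prot:axioms:corr} and~\ref{prot:axioms:uniqr} guarantee that if the advice belongs to $\PROT$, then the responses fed to $M$ are exactly the responses the ADS would have returned, so $M'$ accepts $w$ for some $p\in\PROT$ iff $M$ accepts $w$. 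The construction is clearly log-space in $\langle M\rangle$ and preserves determinism of the $\logTM$-part, which yields the deterministic reduction as well.

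For $\MP\NA_\PROT\logTM \lelog \MP\NBP\logTM$, given an $\NA_\PROT\logTM$ $M$, I would construct an $\NBP\logTM$ $M'$ that simulates $M$ while reconstructing the advice on the fly through interaction with the ADS. Whenever $M$ would read the next advice letter, $M'$ nondeterministically guesses which symbol it is: if it is a letter from $\Gamma_{\wr}$, $M'$ writes it on its own query tape; if it is a query letter $\q \in \Gamma_{\query}$, $M'$ performs the query on the ADS and remembers the actual response $\r$; if it is a response letter, $M'$ verifies that it equals the previously stored $\r$ and only then feeds this symbol to the simulated $M$. Accepting runs of $M'$ correspond exactly to pairs (guessed advice $p$, accepting run of $M$ on $(w,p)$) where the guessed advice is consistent with an actual interaction with the ADS---which, by axioms~\ref{prot:axioms:empty}--\ref{prot:axioms:ext} of Definition~\ref{def:PROT}, means $p \in \PROT$; conversely any accepting advice $p\in\PROT$ yields a consistent guess for $M'$ by axiom~\ref{prot:axioms:uniqr}.

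I do not expect any serious obstacle. The main things to take care of are: (a) bounding all bookkeeping (pointers into the advice, the single stored response letter, the $\logTM$ configurations, and the guessed query-tape content, which is reset after each query) within logarithmic space; (b) checking that the deterministic simulation really is deterministic, which follows from the fact that the advice is part of the input and the uniqueness axiom~\ref{prot:axioms:uniqr} makes the whole simulation a function of $(w,p)$; and (c) pointing out that the reverse deterministic reduction is not claimed, because simulating a $\DA_\PROT\logTM$ by a $\DBP\logTM$ would require guessing the advice without nondeterminism.
\end{proofIdea}
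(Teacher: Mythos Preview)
Your proposal is correct and follows essentially the same approach as the paper's own proof idea: simulate one machine by the other, using the advice tape to carry the protocol in one direction and using the ADS to regenerate the advice letter by letter in the other, with the deterministic reduction handled exactly as in Theorem~\ref{th:MP1DPBAtoDRR}. In fact your write-up is more detailed than the paper's sketch, which simply says that $M_A$ verifies $M_B$'s run against the protocol on the advice tape and that $M_B$ guesses $y$ and checks $y\in\PROT$ by performing its queries; your explicit invocation of axioms~\ref{prot:axioms:corr}--\ref{prot:axioms:uniqr} and your remark that the reverse deterministic reduction is not claimed are welcome additions.
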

We provide only the proof idea since the proof follows our general technique
that we have applied above a lot. 

\begin{proof}[Proof idea]
Let $M_A$ be a $\NA_\PROT\logTM$, $M_B$ be a $\NBP\logTM$, and $x$ be an input word.
Since both kinds of $\logTM$s are nondeterministic, $M_A$ can guess and verify $M_B$'s successful run on $x$ provided that $M_B$'s protocol is 
written on the advice tape; $M_B$ can guess $y \in \PROT$ and a successful run of $M_A$ on $(x, y)$, and verify it: the transitions on configurations are simulated on log space and the fact $y \in \PROT$ is verified by performing subsequently the queries from the sequence~$y$.

The case of deterministic models is similar to Theorem~\ref{th:MP1DPBAtoDRR}. $M_A$ just simulates $M_B$ and tests whether the query words on the advice tape, queries an the results are the same as $M_B$ has during processing of the input. 
\end{proof}

Combining all together, we obtain the main theorem of the section.

\begin{theorem}\label{th:RRmain}
  \begin{align}
    \label{eq_:I}
    \nreg(\PROT) \sim_{\log} \CL(\NBP\logTM) = \{ L \mid L \lelog \nreg(\PROT) \},\\
    \label{eq_:IIa}
    \CL(\DBP\logTM) \lelog \reg(\PROT),\\
    \label{eq_:IIb}
			\CL(\DBP\logTM) \subseteq \{ L \mid L \lelog \reg(\PROT) \}.
  \end{align}
\end{theorem}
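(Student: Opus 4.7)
The plan is to obtain all three assertions by stitching together Theorem~\ref{th:CLsRRsummary}, which characterizes $\CL(\mathrm{A}_F\logTM)$ as the downward $\lelog$-closure of $\nreg(F)$ or $\reg(F)$, with Lemma~\ref{lemma:AandBPlogTMequiv}, which bridges the advice-based and protocol-based $\logTM$ models. Nothing new needs to be built; the whole argument is a short bookkeeping exercise that threads these two ingredients together.

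For the nondeterministic part~\eqref{eq_:I}, I would first translate the two-sided statement $\MP\NA_\PROT\logTM \sim_{\log} \MP\NBP\logTM$ of Lemma~\ref{lemma:AandBPlogTMequiv} into the class identity $\CL(\NBP\logTM) = \CL(\NA_\PROT\logTM)$. Instantiating Theorem~\ref{th:CLsRRsummary} at $F=\PROT$ then immediately rewrites the right-hand side as $\{L \mid L \lelog \nreg(\PROT)\}$, giving the equality in~\eqref{eq_:I}. The accompanying log-space equivalence $\nreg(\PROT) \sim_{\log} \CL(\NBP\logTM)$ splits into two pieces: the direction $\CL(\NBP\logTM) \lelog \nreg(\PROT)$ is a repackaging of the equality just obtained, while $\nreg(\PROT) \lelog \CL(\NBP\logTM)$ follows because Lemma~\ref{lemma:NregToNAFlogTM} produces an $\NA_\PROT\logTM$ (hence an $\NBP\logTM$) that recognizes $\nreg(\PROT)$ itself.

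For the deterministic assertions~\eqref{eq_:IIa} and~\eqref{eq_:IIb}, only the one-sided reduction $\MP\DBP\logTM \lelog \MP\DA_\PROT\logTM$ is available from Lemma~\ref{lemma:AandBPlogTMequiv}. This yields $\CL(\DBP\logTM) \subseteq \CL(\DA_\PROT\logTM)$, and chaining with the deterministic half of Theorem~\ref{th:CLsRRsummary} (i.e.\ Lemma~\ref{lemma:DAFlogTMtoDreg}) gives $\CL(\DBP\logTM) \subseteq \{L \mid L \lelog \reg(\PROT)\}$, which under the paper's conventions is exactly \eqref{eq_:IIa}, restated in set-builder form as \eqref{eq_:IIb}.

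The step I expect to be the main obstacle is not any one reduction but rather explaining clearly why the deterministic inclusion cannot be upgraded to an equality, so that the reader is not left expecting a matching lower bound. A $\DA_\PROT\logTM$ receives a preselected correct protocol $y\in\PROT$ on its advice tape, whereas a $\DBP\logTM$ must generate its query sequence deterministically on-the-fly and in general cannot reproduce an arbitrary prescribed $y$ that way; the nondeterministic proof of Lemma~\ref{lemma:AandBPlogTMequiv} gets around this by guessing $y$ and verifying it by actually posing the queries, an option unavailable in the deterministic setting. This asymmetry is exactly why \eqref{eq_:IIa}--\eqref{eq_:IIb} are one-sided, and it is the only point in the argument where the nondeterministic and deterministic cases genuinely diverge.
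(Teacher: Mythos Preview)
Your proposal is correct and follows essentially the same route as the paper: both combine Lemma~\ref{lemma:AandBPlogTMequiv} (to pass between the $\BP$ and $\mathrm{A}_\PROT$ models) with Theorem~\ref{th:CLsRRsummary} and its supporting Lemmata~\ref{lemma:NAFlogTMtoNreg}--\ref{lemma:NregToNAFlogTM}, and both implicitly use that the simulations in Lemma~\ref{lemma:AandBPlogTMequiv} are machine-by-machine to lift the $\mathrm{MP}$ equivalence to a class equality. Your observation that \eqref{eq_:IIa} and \eqref{eq_:IIb} are the same statement under the paper's convention for $S\lelog P$ is correct and in fact a slight streamlining over the paper, which derives them separately.
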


\begin{proof} We begin with the proof of~\eqref{eq_:I}. By Lemma~\ref{lemma:AandBPlogTMequiv}
	\begin{equation}\label{eq_:a}
		\MP\NBP\logTM \sim_{\log} \MP\NA_\PROT\logTM .
	\end{equation}
	By Lemmata~\ref{lemma:NAFlogTMtoNreg} and~\ref{lemma:NregToNAFlogTM}
	\begin{equation}\label{eq_:b}
		\CL(\NA_\PROT\logTM) \sim_{{\log}} \nreg(\PROT).
	\end{equation}
	Eqs.~\eqref{eq_:a} and~\eqref{eq_:b} imply
	\begin{equation}\label{eq_:Ia}
		\CL(\NBP\logTM) \sim_{\log} \nreg(\PROT) .
	\end{equation}
	By Theorem~\ref{th:CLsRRsummary} 
	\begin{equation}\label{eq_:c}
		\CL(\NA_\PROT\logTM) = \{ L \mid L \lelog \nreg(\PROT) \}.
	\end{equation}
	Eqs.~\eqref{eq_:a} and~\eqref{eq_:c} imply
	\begin{equation}\label{eq_:Ib}
			\CL(\NBP\logTM) = \{ L \mid L \lelog \nreg(\PROT) \}.
	\end{equation}
	Eqs.~\eqref{eq_:Ia} and \eqref{eq_:Ib} form \eqref{eq_:I}.

	Now we move to the proof of~\eqref{eq_:IIa} and~\eqref{eq_:IIb}. By Lemma~\ref{lemma:DAFlogTMtoDreg}
	\begin{equation}\label{eq_:cc}
		\CL(\DA_\PROT\logTM) \lelog \reg(\PROT).
	\end{equation}
	By Lemma~\ref{lemma:AandBPlogTMequiv}
	\begin{equation}\label{eq_:dd}
		\MP\DBP\logTM \lelog \MP\DA_\PROT\logTM.
	\end{equation}
	Eqs.~\eqref{eq_:cc} and~\eqref{eq_:dd} imply \eqref{eq_:IIa}.
	By Theorem~\ref{th:CLsRRsummary}
	\begin{equation}\label{eq_:ee}
			\CL(\DA_\PROT\logTM) = \{ L \mid L \lelog \reg(\PROT) \}.
	\end{equation}
	Eqs.~\eqref{eq_:dd} and~\eqref{eq_:ee} imply \eqref{eq_:IIb}.	
\end{proof}

\section{Applications}\label{sect:Applications}

In this section we prove the applications~(\ref{fact:P}-\ref{fact:NP}) described in Section~\ref{sec:results}.

\begin{theorem}
	Assertions (\ref{fact:P}-\ref{fact:NP}) hold.
\end{theorem}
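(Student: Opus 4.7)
The plan is to apply Eq.~\eqref{eq_:I} of Theorem~\ref{th:RRmain}, which reduces each assertion to an analysis of the complexity of $\nreg(\PROT)$ for the appropriate protocol language. Since $\CL(\NBP\logTM) = \{ L \mid L \lelog \nreg(\PROT) \}$, showing that $\nreg(\PROT)$ lies in a log-space-closed class $\mathcal{K}$ and is $\mathcal{K}$-hard under $\lelog$ simultaneously yields $\CL(\NBP\logTM) \subseteq \mathcal{K}$ together with a $\mathcal{K}$-complete member; this template will be instantiated four times.

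For~\eqref{fact:P} I would take the pushdown protocols $\DPROT$ from Example~\ref{ex:DPROT}. Since $\DPROT \simrat D_2$ and $\nreg$ is monotone under $\lerat$ (Lemma~\ref{lemma:leratRR}), $\nreg(\DPROT)$ is log-space equivalent to the classical non-emptiness of intersection of a context-free language with an NFA-regular language, which is $\PP$-complete (upper bound via a CYK-style dynamic program on the product of a fixed CFG for $D_2$ and the input NFA; lower bound by reduction from monotone circuit value). Since $\PP$ is closed under $\lelog$, Eq.~\eqref{eq_:I} gives $\CL(\mathrm{NPD}\logTM) = \PP$.

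For~\eqref{fact:PSPACEI} I would take $\SAPROT$ from Example~\ref{ex:SAPROT}; the $\PSPACE$-completeness of $\nreg(\SAPROT)$ was established in~\cite{RVCSR2018}. The upper bound simulates the set by a polynomial-space nondeterministic procedure that maintains the consistency of tests along a guessed accepting NFA run, while the lower bound reduces $\mathsf{QBF}$ by encoding quantifier alternations as blocks of insertions and tests. Eq.~\eqref{eq_:I} then yields $\CL(\mathrm{NS}\logTM) \subseteq \PSPACE$ together with a $\PSPACE$-complete witness.

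For~\eqref{fact:PSPACEII} and~\eqref{fact:NP} I would introduce protocol languages $\mathsf{S_1PROT}$ and $\SPiPROT$ modelling the restricted set (at most one insertion, and additionally a unary $\Gamma_{\wr}$). The $\PSPACE$ upper bound for $\mathsf{S_1PROT}$ holds because, once the inserted word $w$ is fixed, every test answer is determined, and a polynomial-space nondeterministic machine can guess $w$ on demand while replaying the NFA and checking consistency of each test against the guessed prefix; in the unary case the word $w$ is a nonnegative integer whose length is polynomial in the NFA size, so its value can be guessed in binary, giving the $\NP$ upper bound. The main obstacle is the hardness direction: for $\mathsf{S_1PROT}$ one must extract $\PSPACE$-hardness from a model that commits to a single word and can only test afterwards, which I would handle by reducing from acceptance of a polynomial-space Turing machine, encoding its entire computation trace into $w$ and using the NFA to scan adjacent configurations and locally verify the transition relation via successive tests; for $\SPiPROT$, hardness would follow by a reduction from an arithmetic $\NP$-complete problem such as $\mathsf{SUBSET\text{-}SUM}$, exploiting that the unique unary word stored is effectively a number chosen by the computation. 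Applying Eq.~\eqref{eq_:I} concludes both remaining assertions.
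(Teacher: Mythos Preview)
Your treatment of~\eqref{fact:P} and~\eqref{fact:PSPACEI} matches the paper's: both invoke the known complexity of $\nreg(D_2)$ and $\nreg(\SAPROT)$ together with Lemma~\ref{lemma:leratRR} and Theorem~\ref{th:RRmain}.

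For~\eqref{fact:PSPACEII} and~\eqref{fact:NP} the paper takes a different route. It introduces the periodic filters $\Per_k$ (words of the form $(w\#)^n$ with $w$ over a $k$-letter alphabet), constructs explicit FSTs witnessing $\SPkPROT \simrat \Per_k$, and then cites the known facts that $\nreg(\Per_1)$ is $\NP$-complete and $\nreg(\Per_k)$ is $\PSPACE$-complete for $k\geq 2$~\cite{ALRSS09,Vya09DMeng}. Lemma~\ref{lemma:leratRR} and Theorem~\ref{th:RRmain} then finish the job. No direct upper- or lower-bound argument for the $\mathrm{S}_1$ structures is carried out.

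Your direct approach, by contrast, has real gaps that would need to be closed.
\begin{itemize}
\item \textbf{$\NP$-hardness via $\mathsf{SUBSET\text{-}SUM}$.} Over a unary write alphabet, an NFA that writes \emph{exactly} the word $a^t$ needs $\Theta(t)$ states: any accepting path of length $t>|S|$ contains a cycle, so $a^{t+c}$ is also produced. Hence your reduction is polynomial only when the numbers are given in unary, and unary $\mathsf{SUBSET\text{-}SUM}$ is in $\PP$. The actual $\NP$-hardness of $\nreg(\Per_1)$ (equivalently $\nreg(\SPiPROT)$) is a nontrivial result that the paper imports from the literature rather than reproving.
\item \textbf{$\NP$ upper bound.} The stored unary word need not have polynomial length: an NFA built from disjoint cycles of distinct prime lengths $p_1,\dots,p_m$ forces the minimal common length to be $\prod_i p_i$, which is exponential in the NFA size. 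One must guess the length in binary and compute $a^m$-reachability by repeated squaring of the transition matrix; your sentence ``its length is polynomial in the NFA size'' is false as stated.
\item \textbf{$\PSPACE$-hardness.} A test in the $\mathrm{S}_1$ structure returns only the single bit ``does the tested word equal the stored $w$?''. So ``scanning adjacent configurations via successive tests'' cannot read pieces of $w$. The reduction that does work is from intersection-nonemptiness of DFAs: insert some $w$, then for each DFA $A_j$ write a word $u_j\in L(A_j)$ and test it, accepting only if every response is~$+$; this forces $u_1=\dots=u_k=w\in\bigcap_j L(A_j)$. Your description does not expose this structure, and the ``trace in $w$'' idea as written does not yield a polynomial-size NFA.
\end{itemize}

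In short, the paper's rational-equivalence argument delegates the delicate complexity facts to known results about $\Per_k$ and is therefore both shorter and complete; your direct route could in principle be made to work for~\eqref{fact:PSPACEII} (via DFA intersection rather than trace encoding), but the $\NP$-hardness for the unary case is genuinely subtle and your proposed reduction fails.
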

\begin{proof}
	$\SAPROT$ was defined in Example~\ref{ex:SAPROT}. It was proved in \cite{RVCSR2018} that the problems $\ov{\mathrm{E\mm1NSA}} \sim_{\log} \nreg(\SAPROT)$ are $\PSPACE$-complete. So we obtain~\eqref{fact:PSPACEI} by applying Theorem~\ref{th:RRmain}. We prove~\eqref{fact:P} in the same way by combining the facts
$\DPROT \sim_{\mathrm{rat}} D_2$ (Example~\ref{ex:DPROT}) and $\nreg(D_2)$ is P-complete~\cite{RVRR2015DCFS}, 
and apply Lemma~\ref{lemma:leratRR} and Theorem~\ref{th:RRmain}.
	
To prove~(\ref{fact:PSPACEII}-\ref{fact:NP}) we use facts about the filters $\Per_k = \{ (w\#)^k \mid w \in \Sigma_k \}$, where $\Sigma_k$ is a $k$-letter alphabet. The problem $\nreg(\Per_1)$ is $\NP$-complete and
$\nreg(\Per_k)$, $k > 1$,
is $\PSPACE$-complete~\cite{ALRSS09,Vya09DMeng}. We construct set-protocols based on these languages as follows.
Let $\Gamma_{\wr} = \Sigma_k$, $\Gamma_{\query} = \{\ins, \test\}$, $\Gamma_\resp = \{+, -\}$.
	The response to the $\ins$-query is positive only for the first query, $\test$-queries are the same as in Example~\ref{ex:SAPROT}. We denote the language of correct protocols with $\Gamma_{\wr} = \Sigma_k$ as $\SPkPROT$. It is easy to see that $\Per_k \lerat \SPkPROT$: an FST $T$ maps words of the form $w\ins+w\test+\cdots w\test+$ to $w\#w\#\cdots w\#$ by replacing queries and responses by $\#$; the sequence of queries with responses 
	$\ins+,\test+,\ldots,\test+$ is verifiable via a finite state control (the inputs with invalid sequence are rejected by the FST), so $\nreg(\Per_k) \lelog \nreg(\SPkPROT)$ by Lemma~\ref{lemma:leratRR}.

        Now we prove that $\SPkPROT \lerat \Per_k$. The FST $T$ takes on the input a word $(w\#)^n$ and acts as follows.
While translating a block $w\#$ to the output,
        it has the following options: (i) change at least one letter, (ii) erase at least one letter and maybe change others,   (iii) add at least one letter and maybe change others, (iv) do not change~$w$. Until $T$ has not write $\ins$, it replaces $\#$ by $\test-$ in the cases (i-iii), and either by $\test-$ or by $\ins+$ in the case (iv). After $T$ wrote $\ins+$, it replaces $\#$ by $\test+$ in the case (iv) and either by $\test-$ or by $\ins-$ in the cases (i-iii). It is easy to see that $T((w\#)^n)$ consists of all correct protocols with either $w$ first $\ins$-query or without $\ins$-queries at all, and exactly $n$ queries. So $T(\Per_k) = \SPkPROT$ and assertions~(\ref{fact:PSPACEII}-\ref{fact:NP}) follows from Lemma~\ref{lemma:leratRR} and Theorem~\ref{th:RRmain}.
\end{proof}

\section{On computational complexity\\ of correct protocol languages}

Theorem~\ref{th:RReqNonEmp} essentially says that the computational complexity of the non-emptiness problem for ADS-automata is the same as the computational complexity of the NRR  problem for
the corresponding correct protocols languages.
It can be used to answer the question about the range of complexities of the non-emptiness problems 
for ADS-automata. It extends the known results about the complexity of RR
problems~\cite{VyaPIT13}. It appears that these complexities
are almost universal. It means that for any nonempty language $X$
there exists a language of correct protocols $\PROT$ such that $X$
is reducible to $\ov{\EP1\NBPA}$.  The reductions in the two directions 
differ. In one direction it is a log-space $m$-reduction. In the
other, we present the proof
only for Turing reductions in polynomial time.

\begin{theorem}\label{th:BP-universality}
For any nonempty language $X\subseteq \{0,1\}^*$ there exists a language of correct
protocols $\PROT$ such that
\[
X \lelog\ov{\EP1\NBPA}\; \stackrel{\text{\upshape Th.~\ref{th:RReqNonEmp}}}{\sim_{\log}} \; \nreg(\PROT) \leTp X.
\]
\end{theorem}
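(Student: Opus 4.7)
My plan is to build a language of correct protocols $\PROT$ directly from $X$ so that responses encode $X$-membership queries on the write-words, and then verify both reductions separately. Fix $\Gamma_{\wr}=\{0,1\}$, $\Gamma_{\query}=\{\q,\q_r\}$, $\Gamma_{\resp}=\{+,-,\ok\}$ with $\valid=\{(\q,+),(\q,-),(\q_r,\ok)\}$, and let $\PROT$ be the set of protocols whose every $\q$-block $u\q r$ satisfies $r=+$ iff $u\in X$, and whose every $\q_r$-block has response $\ok$ regardless of its write-word. Axioms~(i)--(v) of Definition~\ref{def:PROT} hold because block-validity is a local deterministic function of the write-word alone, and axiom~(vi) is witnessed by the pair $(\q_r,\ok)$, which can be inserted between any two protocols in $\PROT$.

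For the log-space direction $X\lelog\nreg(\PROT)$, on input $w\in\{0,1\}^*$ I would output in logarithmic space an NFA $\A_w$ of $O(|w|)$ states accepting the single word $w\q+$. By construction $L(\A_w)\cap\PROT=\{w\q+\}$ if $w\in X$ and is empty otherwise, so $\A_w\in\nreg(\PROT)\iff w\in X$.

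For the polynomial-time Turing reduction $\nreg(\PROT)\leTp X$, the plan is to decide $L(\A)\cap\PROT\neq\es$ by reachability in an auxiliary \emph{block graph} $G_\A$ on the states of $\A$, with an edge $s\to t$ whenever $\A$ admits an $s\to t$ path reading some valid single block. Accepting protocols in $L(\A)\cap\PROT$ are in bijection with start-to-accept paths in $G_\A$, so the task reduces to computing the edges and running graph reachability. Reset-block edges are determined by ordinary regular emptiness tests on $\A$; each $\q$-block edge, for a candidate response $r\in\{+,-\}$, requires deciding whether the regular language $L_{s,t,r}=\{u:\A\text{ has an }s\to t\text{ path reading }u\q r\}$ meets $X$ (when $r=+$) or $\ov X$ (when $r=-$), and the $X$-oracle is invoked only here.

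The main obstacle is this $\q$-edge test, since for arbitrary $X$ the problem ``does a regular language meet $X$?'' is strictly harder than membership in $X$ and cannot be decided with polynomially many oracle queries in the naive construction above. I plan to overcome it by refining $\PROT$ so as to impose a syntactic length-discipline on write-words---for example, requiring each $\q$-block's write-word to carry a unary length certificate that synchronizes the write-word's length with its block's syntactic position---and then use a pumping argument tailored to the refined $\PROT$ to guarantee that any minimal accepting protocol in $L(\A)\cap\PROT$ has length polynomial in the number of states of $\A$. Under such a refinement the candidate witnesses per $\q$-edge form a polynomially bounded set of path-labels in $\A$, each verifiable by one $X$-oracle call, and reachability in $G_\A$ then decides $\nreg(\PROT)$ in polynomial time with $X$-oracle. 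Verifying that the refined $\PROT$ still admits the log-space reduction from $X$---and that the translation between the NRR formulation and the $1\NBPA$ formulation of Theorem~\ref{th:RReqNonEmp} is preserved---is the technical crux of the argument.
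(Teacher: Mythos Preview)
Your log-space reduction $X\lelog\nreg(\PROT)$ is fine, and you have correctly isolated the real difficulty: for the reduction $\nreg(\PROT)\leTp X$ one must, for each pair of states $s,t$ of the input NFA, decide whether the regular language $L_{s,t,r}\subseteq\{0,1\}^*$ meets $X$ (resp.\ $\bar X$). The gap is in your proposed cure. Forcing write-words to carry a unary length certificate, or otherwise ensuring that a \emph{minimal} accepting protocol has length polynomial in $|\A|$, only bounds the \emph{length} of candidate write-words; it does nothing about their \emph{number}. Even when $L_{s,t,r}$ is finite and consists only of words of length $\leq p(|\A|)$, there can be $2^{\Theta(p(|\A|))}$ such words, and an oracle for $X$ gives you only one bit per query. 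Hence ``the candidate witnesses per $\q$-edge form a polynomially bounded set'' is simply false under the refinement you describe, and no pumping argument on $\PROT$ alone can repair this: the exponential blowup lives in $\A$, not in $\PROT$.

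The paper's proof bypasses the obstacle in a different way. It does \emph{not} restrict the syntax of $\PROT$; instead it replaces the bare $X$ by a carefully engineered language $L$ (with $\sq(X)\subseteq L$, $\sq(\bar X)\subseteq\bar L$ for a log-space encoding $\sq$) that is designed so that the question ``does a given regular language meet $L$ (resp.\ $\bar L$)?'' is answerable with polynomially many $X$-oracle calls. Concretely: any \emph{infinite} regular language is forced to hit both $L$ and $\bar L$; for \emph{finite} regular languages (DAGs) the membership of a word $uv$ with $|u|=|v|$, $u\ne v$, in $L$ is decided by the lexicographic order of $u$ and $v$, so one only needs to compute, in polynomial time, lexicographically extremal words in the sets $\Lft(s,\ell)$, $\Rt(s,\ell)$ and compare them; the oracle is invoked only on the residual ``diagonal'' words $uu$ of the special form $\sq(x)$. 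This structural control over $L$, rather than a syntactic discipline on $\PROT$, is what keeps the number of oracle calls polynomial.
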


In the proof of Theorem~\ref{th:BP-universality} we use the language
of protocols defined as follows. 
Set $\Gamma_{\wr}=\{0,1\}$, $\Gamma_{\query} = \{\#,r\}$,
$\Gamma_{\resp} = \{+,-,r\}$. The relation $\valid $ is defined as
follows: $ \valid(\#)  =\{+,-\}$,  $\valid(r) = \{r\}$.
 The language of correct protocols $\PROT$
consists of protocols such that, for every query block $u_i\q_i\r_i$,
either $\q_i=\r_i = r$ and  $u_i=\eps$, or $\q_i = \#$, $\r_i = +$ and $u_i\in L$, or  $\q_i = \#$, $\r_i = -$ and $u_i\notin L$.
Here $L\subseteq \{0,1\}^*$ is a language depending on  $X$ in the statement of the theorem.

The exact choice of $L$ is complicated. So we start with the presentation  of basic ideas  behind the  proof of
Theorem~\ref{th:BP-universality}. We encode binary words using a log-space computable injective map $\sq\colon \{0,1\}^*\to\{0,1\}^*$ such that $\sq(X)\subseteq L$ and $\sq(\bar X)\subseteq \bar L$. It suffices for the first reduction in the theorem, $X \lelog \nreg(\PROT)$, since the protocol $\sq(x)\#+$ is correct iff $x\in X$.

For the second reduction, i.e., $ \nreg(\PROT) \leTp X$, we need much
more requirements.  Let $\A$ be an input automaton for $\nreg(\PROT)$
and $S$ be its state set. We are going to decide
$L(\A)\cap \PROT\ne \es$ in polynomial time using oracle calls of the oracle $X$. For this purpose we reduce the question $L(\A)\cap \PROT\ne \es$  to the question  $R\ne \es$ for some regular language $R\in\{y, n, \#, +, -, r\}^*$.
By definition, $w\in R$ if there exists an accepting run of $\A$ that processes a correct protocol $p$ such that $p$ is obtained from $w$ by substitutions of letters  $y$ and $n$ with words of $L$  and $\bar L$ respectively (different words may be used for different occurrences of the letters).
To check the correctness of the run processing the protocol, we need to compute, for all pairs $s',s''\in S$, all possible transitions from $s'$ to $s''$ by processing words from $L$ only as well as all possible transitions from $s'$ to $s''$ by processing words from $\bar L$ only.

Thus, the main part of the reduction consists of solving NRR problems $L(\A_{s's''})\cap L\ne \es$ and $L(\A_{s's''})\cap\bar L\ne \es$ for all pairs $s', s''\in S$. Here $\A_{s's''}$ are  auxiliary automata. The states and the transitions of $\A_{s's''}$ coincide with the states and the transitions of $\A$. The initial state of  $\A_{s's''}$ is $s'$ and the only accepting state is $s''$.

Note that $L(\A_{s's''})$ may be infinite and it causes the first difficulty: one need to consider arbitrary long words in the protocol language. To avoid this difficulty we require that any infinite regular language intersects both $L$ and $\bar L$. Therefore $L(\A_{s's''})\cap L\ne \es$ and $L(\A_{s's''})\cap\bar L\ne \es$ if $L(\A_{s's''})$ is infinite.

If $L(\A_{s's''})$ is finite, it means that the transition graph is DAG (after removing states that are not reachable  and coreachable in $\A_{s's''}$). The second difficulty: it might be exponentially many words in $L(\A_{s's''})$. Again, to overcome it, we pose specific requirements on $L$ to guarantee  that that verifying $L(\A_{s's''})\cap L\ne \es$ and  $L(\A_{s's''})\cap\bar L\ne \es$ requires polynomially many oracle calls of the oracle $X$.

Now we provide formal arguments for the above plan of proof.
We encode binary words by the injective map
\begin{equation}\label{eq:square}
\sq\colon  x \mapsto \beta(x)11\beta(x)11,
\end{equation}
where $\beta\colon\{0,1\}^*\to\{0,1\}^* $  is the morphism defined on the symbols as $\beta(0) =01$, $\beta (1) = 10$.

For an NFA $\A$ with the state set $S$ we define a relation
\begin{equation*}%%\label{eq:xhookrightarrow}
		s' \run{u} s''
\end{equation*}
that holds if $\A$ can reach  $s''$ on processing $u$ starting from the state $s'$.

Now we list the requirements on the language $L$. 
\begin{enumerate}
\item\label{cond:sq} As it mentioned before, $\sq(X)\subseteq L$ and $\sq(\bar X)\subseteq \bar L$.
\item\label{cond:infinite} There exists a language $W\subseteq \{0,1\}^*$ such that both $W\cap L$ and $W\cap \bar L$ are recognized in polynomial time,  and,
for any NFA $\A$ over the alphabet $\{0,1\}$ and any pair of its states $s_1$, $s_2$, either $L(\A_{s_1s_2})$ is finite, or  there exist $w_1\in L\cap W$, $w_2\in \bar L\cap W$ such that $w_1\in L(\A_{s_1s_2})$ and $w_2\in L(\A_{s_1s_2})$. 
\item \label{cond:sparse} The language $W$ is sparse: $|W\cap\{0,1\}^{\leq n}|= \poly(n)$. Moreover, the lists of  words in $L\cap W\cap\{0,1\}^{\leq n}$ and, respectively, in $\bar L\cap W\cap\{0,1\}^{\leq n}$ can be  generated in polynomial time.
\item \label{cond:lex} If $|u|=|v|$, $u\ne v$,  and $uv\notin W$ then  $uv\in L$ iff $u\lex v$, where $\lex$ is the lexicographical order.
\item\label{cond:trash} If $|w|$ is odd and $w\notin W$ then $w\in L$. If $w = xx$ and $w\notin \sq (\{0,1\}^*)\cup W$ then  $w\in L$. 
\end{enumerate}

The sets of $L$-transitions and $\bar L$-transitions are defined as follows:
\[
\delta^L_{\A}(s) =\big \{s'\in S :\exists u\ s\run{u} s',\ u\in L\big\},\quad
\delta^{\bar L}_{\A}(s) =\big \{s'\in S :\exists u\ s\run{u} s',\ u\in \bar L\big\}.
\]

The main part of the proof of Theorem~\ref{th:BP-universality} is the following lemma.

\begin{lemma}\label{lm:Lcorrectness}
Let $L$ be a language satisfying  Requirements \ref{cond:sq}--\ref{cond:trash}.
Then there exists a polynomial time algorithm with the oracle $X$ that outputs the  sets $\delta^L_{\A}(s) $,
$\delta^{\bar L}_{\A}(s) $, where $\A$ is an input of $\nreg(\PROT)$ and $s$ is its state.
\end{lemma}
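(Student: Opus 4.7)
\begin{proofIdea}
I would handle each ordered pair $(s,s')\in S\times S$ separately in polynomial time, which suffices because there are polynomially many pairs. For a fixed pair, I first test in polynomial time whether $L(\A_{ss'})$ is finite, by trimming $\A_{ss'}$ to the states reachable from $s$ and co-reachable to $s'$ and looking for a cycle. If a cycle is present, $L(\A_{ss'})$ is infinite and Requirement~\ref{cond:infinite} immediately puts $s'$ into both $\delta^L_\A(s)$ and $\delta^{\bar L}_\A(s)$ via the guaranteed witnesses in $W\cap L$ and $W\cap\bar L$. Otherwise the trimmed automaton is a DAG, every accepted word has length at most $n:=|S|$, and the task reduces to deciding $L(\A_{ss'})\cap L\cap\{0,1\}^{\le n}\ne\es$ (and the symmetric question for $\bar L$).

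Using Requirements~\ref{cond:sq}, \ref{cond:lex}, and \ref{cond:trash} I partition $L\cap\{0,1\}^{\le n}$ into five mutually exclusive and exhaustive cases, and the overall answer is the disjunction of the five case answers. The cases are: (W) words in $L\cap W$; (A) odd-length words outside $W$; (B) words $uv\notin W$ with $|u|=|v|$, $u\ne v$, $u\lex v$; (C) squares $xx\notin W$ with $x\notin\{01,10\}^*\cdot 11$; (D) words $\sq(y)\notin W$ with $y\in X$. Case (W) is handled by generating the polynomial-size list $L\cap W\cap\{0,1\}^{\le n}$ from Requirement~\ref{cond:sparse} and testing each word for acceptance by $\A_{ss'}$. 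For cases (A), (B), (C) I build polynomial-size reachability instances and subtract $W_n:=W\cap\{0,1\}^{\le n}$ by intersecting with the complement of its polynomial-size trie DFA. Case (A) uses a parity-checking DFA; case (B), for each midpoint $k\le n/2$, each intermediate state $m\in S$, and each lex-witness position $i\le k$, sets up a polynomial-size product of two copies of $\A$ that are forced to read the same symbol at positions $<i$, to read $0$ and $1$ respectively at position $i$, and to read free symbols afterwards, then tests reachability from $(s,m,0)$ to $(m,s',k)$; case (C) is analogous but with equal symbols throughout, together with a parallel DFA enforcing $x\notin\{01,10\}^*\cdot 11$. None of (W)--(C) uses the oracle and each runs in polynomial time.

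Case (D) is the only one consulting the oracle $X$, and I expect it to be the main obstacle. I construct a polynomial-size NFA $\B$ over $\{0,1\}$ with $L(\B)=Y:=\{y:\sq(y)\in L(\A_{ss'})\setminus W\}$: a state is a triple $(p_1,p_2,m)$, where $m\in S$ is a guessed state of $\A$ reached just after the central $11$-block of $\sq(y)$, while $p_1$ and $p_2$ track two synchronous runs of $\A$ on the two copies of $\beta(y)$, each symbol of $y$ advancing both runs by two $\A$-transitions along $\beta$; an extra DFA factor removes words that map into $W_n$. The remaining task is to decide $L(\B)\cap X\ne\es$ using polynomially many queries to $X$. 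The crucial step, which I expect to be the hardest part of the argument, is to show that $|Y|=\poly(n)$ whenever cases (A), (B), (C), (W) have already failed: because failure of case (C) forces every square in $L(\A_{ss'})\setminus W$ to be of the form $\sq(y)$, a fooling-set-style argument in the spirit of the classical $2^{\Omega(k)}$ NFA-size lower bound for the matching language $\{ww\}$ should force $\A_{ss'}$ to have super-polynomially many states whenever it accepts super-polynomially many $\sq$-words without also accepting some non-$\sq$ square outside $W$. Once this polynomial bound on $|Y|$ is in place, I enumerate $Y$ by a lex-first walk on the DAG of $\B$ and query $y\stackrel{?}{\in}X$ once per enumerated $y$, finishing case (D)---and with it the lemma---in polynomial time with polynomially many oracle calls.
\end{proofIdea}
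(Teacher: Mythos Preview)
Your five-case decomposition and the handling of the infinite case, of $W$, and of odd lengths track the paper's proof closely. The real divergence is in case~(D), and that is where the gap lies.

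You correctly flag that everything hinges on $|Y|=\poly(n)$ once the earlier cases fail, but the justification you sketch is aimed at the wrong hypothesis. You phrase the fooling-set claim as: many $\sq$-words and \emph{no non-$\sq$ square outside $W$} force many states---i.e., you condition on the failure of~(C). That implication is false: the $O(k)$-state NFA for $(\{01,10\}^k11)^2$ accepts all $2^k$ words $\sq(y)$ with $|y|=k$ and accepts no non-$\sq$ squares whatsoever. What actually drives the bound is the failure of~(B). If no accepted word $uv\notin W$ satisfies $u\lex v$, then for every midpoint state $m$ and half-length $\ell$ one has $\max\Rt(m,\ell)\lexeq\min\Lft(m,\ell)$, whence $\Lft(m,\ell)\cap\Rt(m,\ell)$ contains at most one word; thus the accepted squares---and therefore $Y$---number at most $\poly(n)$. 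This is precisely the content of the paper's Proposition~\ref{prop:evencase}: it computes the extremal words $\min_0(m,\ell)$, $\max_1(m,\ell)$ explicitly, uses the comparison $\min_0\lex\max_1$ as the test for your case~(B), and when that test fails everywhere it already \emph{has} the polynomial list of candidate squares (namely those $(m,\ell)$ with $\min_0(m,\ell)=\max_1(m,\ell)$) to feed to the oracle. So the paper's extremal-word computation is simultaneously the proof of the bound you need and the enumeration of $Y$; your route would work too once the bound is proved with the correct hypothesis, but it separates into two steps what the paper does in one.

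A smaller technical point: your parallel product for~(B) reads $u$ and $v$ simultaneously, while the trie DFA excluding $W$ must read $uv$ sequentially; the trie state at the start of $v$ depends on all of $u$, which the parallel product does not have. This is repairable by nondeterministically guessing the midpoint trie state and verifying it at the end (a polynomial blow-up), but your sketch elides it. The paper sidesteps the issue by removing $W$ first and then working entirely with the sequential extremal-word computation.
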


Before presenting the algorithm from  the lemma, we analyze  the most difficult case separately.

\begin{proposition}\label{prop:evencase}
Let $L$ be a language satisfying  Requirements \ref{cond:sq}--\ref{cond:trash} and   $\A$ be an NFA with the initial state $s_0$ and the unique accepting state $s_f$ such that $L(\A)$ is finite, $L(\A)\cap W = \es$, and each word in $L(\A)$ has an even length. Then conditions $s_f\in\delta^L_{\A}(s_0) $ and $s_f\in\delta^{\bar L}_{\A}(s_0)$ can be verified by  a polynomial time algorithm with the oracle~$X$. 
\end{proposition}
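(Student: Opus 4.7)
The plan is to first analyze which words of $L(\A)$ lie in $L$ and then handle the cases in increasing difficulty. By hypothesis every $w \in L(\A)$ has even length and lies outside $W$; writing $w = uv$ with $|u|=|v|$, Requirements \ref{cond:lex} and \ref{cond:trash} force exactly one of the following three disjoint subcases: (A) $u \ne v$, in which case $w \in L \iff u \lex v$; (B1) $u = v = x$ and $xx \notin \sq(\{0,1\}^*)$, in which case $w \in L$ automatically; (B2) $w = \sq(y)$, for which Requirement \ref{cond:sq} gives $w \in L \iff y \in X$. Hence $L(\A) \cap L \ne \es$ iff case (A) has a witness with $u \lex v$, or case (B1) has any witness, or case (B2) has some $y \in X$; and $L(\A) \cap \bar L \ne \es$ iff (A) has a witness with $v \lex u$ or (B2) has some $y \in \bar X$.

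The polynomial-time stage (no oracle) handles (A) and (B1). I would build a product NFA $\A_\times$ over the alphabet $\{0,1\}^2$ whose states $(s_1,s_2,c)$ carry two $\A$-states and a flag $c \in \{<,=,>\}$ recording the lex relation of the two halves read so far. For each midpoint state $s_m \in S$, reachability in $\A_\times$ from $(s_0, s_m, =)$ to $(s_m, s_f, c')$ decides whether there exists $uv \in L(\A)$ with $|u|=|v|$, $s_0 \run{u} s_m \run{v} s_f$ and the chosen lex comparison $c'$; taking $c' = \,<\,$ and $c' = \,>\,$ settles case (A) for the $L$- and $\bar L$-side respectively, and $c' = \,=\,$ intersected with the regular condition $x \notin (01+10)^*11$ settles case (B1). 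All these checks run in polynomial time in $|\A|$.

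If both $s_f \in \delta^L_\A(s_0)$ and $s_f \in \delta^{\bar L}_\A(s_0)$ have been decided after the above, I return; otherwise I must decide, using the oracle on $X$, whether the set $Y_\A = \{y : \sq(y) \in L(\A)\}$ meets $X$ and whether it meets $\bar X$. For each midpoint $s_m \in S$ I would build an NFA $\B_{s_m}$ that on input $y$ runs two copies of $\A$ in parallel on $\beta(y)11$, one starting at $s_0$ and one at $s_m$, and accepts iff the first copy reaches $s_m$ and the second reaches $s_f$; then $Y_\A = \bigcup_{s_m} L(\B_{s_m})$, and each $\B_{s_m}$ has a polynomial-size transition DAG because $L(\A)$ is finite. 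The main obstacle is that $|L(\B_{s_m})|$ may be super-polynomial in $|\A|$, so naive enumeration followed by one $X$-query per element of $Y_\A$ exceeds the polynomial budget. My plan is an adaptive depth-first traversal of the DAG of $\B_{s_m}$ in which each partial prefix of $y$ is extended according to polynomially many $X$-queries that, using the log-space invertibility of $\sq$ from \eqref{eq:square}, are addressed to genuine candidate completions and used to prune subtrees that cannot deliver a witness of the currently sought type. Establishing a polynomial bound on the total number of oracle queries is the heart of the argument; I expect it to rely on the product-of-halves structure of $\B_{s_m}$, which forces the two parallel copies of $\A$ to agree symbol-by-symbol on $\beta(y)$ and thereby tightly couples the branching in the DAG to the branching in a single copy of $\A$.
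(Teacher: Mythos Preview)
Your treatment of cases (A) and (B1) via the product automaton $\A_\times$ is correct and runs in polynomial time; it is a reformulation of what the paper does by computing, for every midpoint pair $(s,\ell)$, the lexicographic extrema $\min_0(s,\ell),\max_0(s,\ell)$ of $\Lft(s,\ell)$ and $\min_1(s,\ell),\max_1(s,\ell)$ of $\Rt(s,\ell)$. In both approaches the test for (A) on the $L$-side amounts to ``$\exists\,(s,\ell):\ \min_0(s,\ell)\lex\max_1(s,\ell)$'', and symmetrically for the $\bar L$-side.

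The gap is in (B2). Your adaptive DFS with oracle-guided pruning cannot be made to work as stated: $X$ is an \emph{arbitrary} language, so an oracle answer about one $y$ carries no information whatsoever about any other $y'$, and there is nothing to prune. The point you are missing is that the premise ``$|Y_\A|$ may be super-polynomial'' is \emph{false in every instance that actually reaches stage~(B2)}. You only arrive at (B2) when case~(A) has failed for at least one side. If (A) fails for the $L$-side, then for \emph{all} $(s,\ell)$ one has $\max_1(s,\ell)\lexeq\min_0(s,\ell)$; consequently any $u\in\Lft(s,\ell)\cap\Rt(s,\ell)$ satisfies
\[
u\lexeq\max_1(s,\ell)\lexeq\min_0(s,\ell)\lexeq u,
\]
so $\Lft(s,\ell)\cap\Rt(s,\ell)$ is empty or the singleton $\{\min_0(s,\ell)\}=\{\max_1(s,\ell)\}$. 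The symmetric argument (with $\max_0\lexeq\min_1$) applies when (A) fails for the $\bar L$-side. Either way there are at most $|S|^2$ squares $uu\in L(\A)$, hence $|Y_\A|\le|S|^2$, and each such square is explicitly computable as one of the already-computed extrema. One oracle call per square then decides (B2). This is precisely the paper's argument; once you insert this observation, your product-automaton framework finishes the proof with no need for any search heuristics.
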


\begin{proof}
By solving the reachability problem, one can detect the set of reachable and coreachable states of $\A$. All other states can be deleted without affecting $L(\A)$. From now on, we assume that all the states $s\in S$ are reachable and coreachable. 

Since $L(\A)$ is finite, the transition graph of $\A$ is a DAG as well as all its subgraphs.
For each pair of states $s_1,s_2\in Q$, let $\ell(s_1,s_2)$ be the set
\[
\big\{ k: \exists u \ s_1\run{u}s_2\ \text{and}\ |u|=k\big\}.
\]
Using topological sorting, one can construct all the sets $\ell(s_1,s_2)$ in polynomial time by the backward induction based on the relation
\[
\ell(s_1, s_2) = \bigcup_{s\in N(s_1)} \big(1+\ell(s,s_2)\big),
\]
where $N(s_1)$ is the set of states that are reachable from $s$ in one move and $1+X = \{y: y = 1+x,\; x\in X\}$.

For  a positive integer  $\ell$  and a state  $s$  of $\A$, we define
\[
\Lft(s,\ell) = \big\{u : s_0\run{u} s, \ |u|=\ell\big\},\quad
\Rt(s,\ell) =\big\{u : s\run{u} s_f, \ |u|=\ell\big\}.
\]
We order the sets $\Lft(s,\ell)$ and $\Rt(s,\ell)$ in  the lexicographical order. 
Let $\min_0(s,\ell)$ be the minimal word in $\Lft(s,\ell)$, and $\max_0(s,\ell)$ be the maximal word  in $\Lft(s,\ell)$, and $\min_1(s,\ell)$  be the minimal word in  $\Rt(s,\ell)$, and   $\max_1(s,\ell)$  be the maximal word  in  $\Rt(s,\ell)$.

There exists  an inductive procedure that computes $\min_0(s,\ell)$, $\max_0(s,\ell)$, $\min_1(s,\ell)$, and  $\max_1(s,\ell)$ in polynomial time. The procedure also verifies the conditions $\Lft(s,\ell)\ne\es$, $\Rt(s,\ell)\ne \es$. We describe computation of  $\min_0(s,\ell)$, the other words are computed similarly.

Suppose that $u$ is the prefix of $\min_0(s,\ell)$ of the length $0\leq k<\ell$ (if $k=\ell$, then the procedure returns $u$ and stops). Let  $S_k =\{s': s_0\run{u} s'\} $. This set can be computed in polynomial time. If there exists $s'\in S_k$ and $s''\in S$ such that $s''\in\delta_{\A}(s',0)$ and $\ell-k-1\in \ell(s'',s)$, then $u0$ is a prefix of $\min_0(s,\ell)$ of the length $k+1$. Otherwise, if there exists $s'\in S_k$ and $s''\in S$ such that $s''\in\delta_{\A}(s',1)$ and $\ell-k-1\in \ell(s'',s)$, then $u1$ is a prefix of $\min_0(s,\ell)$. If both conditions are not satisfied, then  $\Lft(s)=\es$. 

According to Requirement~\ref{cond:lex} on $L$, 
if there exist a state $s$ and an integer $\ell$ such that $\min_0(s,\ell)\lex \max_1(s,\ell)$, then  $s_f\in\delta^L_{\A}(s_0) $. Otherwise, $\max_1(s,\ell)\lexeq \min_0(s,\ell)$ for all $s$, $\ell$. Since $L(\A)\cap W=\es$, in this case  $s_f\in\delta^L_{\A}(s_0)$ if and only if there exist a state $s$ and an integer $\ell$ such that $\min_0(s,\ell)= \max_1(s,\ell)$ and either $\min_0(s,\ell)=\beta(x)11 $ and $x\in X$ or $\min_0(s,\ell)\max_1(s,\ell)\notin \sq(\{0,1\}^*)$ due to Requirements~\ref{cond:sq}, \ref{cond:trash}. The  condition $x\in X$ can be verified by an  oracle call, the rest of conditions can be verified in polynomial time.

A similar check can be done for  the condition $s_f\in\delta^{\bar L}_{\A}(s_0) $.  It is equivalent to the following: there exist a state $s$ and an integer $\ell$ such that either $\min_1(s,\ell)\lex \max_0(s,\ell)$, or
 $\min_1(s,\ell)=\max_0(s,\ell)=\beta(x)11$  and $x\notin X$.
\end{proof}

\begin{proof}[Proof of Lemma~\ref{lm:Lcorrectness}]
The algorithm maintains the sets $S^+\subseteq S$, $S^-\subseteq S$. Initially, $S^+=S^-=\es$. We will prove that at the end $S^+=\delta^L_{\A}(s) $, $S^-  =\delta^{\bar L}_{\A}(s) $.
The algorithm analyzes states $s'\in S$ one by one and adds $s'$ to the sets $S^+$, $S^-$  according to the following rules.

In the first step, the algorithm decides whether $L(\A_{ss'})$ is infinite.  It can be done in polynomial time.  If the answer is `yes', then the algorithm adds $s'$ to both sets  $S^+ $, $S^- $  and continues with the next state. The correctness of this step is guaranteed by Requirement~\ref{cond:infinite}.

If the answer at the first step is `no', the lengths of words in $L(\A_{ss'})$  do not exceed $|S|$ (otherwise, there exists a run of $\A$ from $s$ to $s'$ containing a~cycle, which implies that  $L(\A_{ss'})$ is infinite).
In the second step, the algorithm checks whether $L(\A_{ss'})\cap L\cap W\ne \es$ and, respectively, whether $L(\A_{ss'})\cap \bar L\cap W\ne \es$. It can be done in polynomial time due to Requirement~\ref{cond:sparse}.  If the first condition holds, then the algorithm adds  $s'$ to $S^+$. If the second condition holds, then the algorithm  adds  $s'$ to $S^-$.

In the third step, the algorithm constructs an NFA $\A'$ recognizing  $L(\A_{ss'})\setminus W$. Let $P$ be the set of prefixes of all words in  $W\cap \{0,1\}^{\leq |S|}$.   Due to Requirement~\ref{cond:sparse},  $|P|= \poly (|S|)$ and $P$ can be constructed in polynomial time. 
The states of $\A'$ are  the pairs $(\tilde s,p)$, $\tilde s\in S$, $p\in P\cup\{\bot\}$.
The set of transitions $\delta_{\A'}((s,p), a)$ consists of pairs $(\tilde s,p')$ such that $\tilde s\in \delta_{\A}(s,a)$ and $p' = p a\in P$, and pairs $(\tilde s,\bot)$ such that  $\tilde s\in \delta_{\A}(s,a)$ and $p' = p a\notin P$. 
The set of transitions $\delta_{\A'}((s,\bot), a)$ consists of pairs
$(\tilde s,\bot)$ such that $\tilde s\in \delta_{\A}(s,a)$. The initial state is $(s,\eps)$. Accepting states are
pairs $(s', \bot)$ and $(s', p)$, where $p\notin W$. This definition implies that $\A'$ can be constructed in polynomial time. To prove the correctness of the
construction, note that processing a word $w\in W\cap\{0,1\}^{\leq |S|}$ from $(s,\eps)$ finishes at the state $(s', w)$ which is not accepting. For  a word $w\in L(\A_{ss'})\setminus W  $ there exists an accepting run of $\A_{ss'}$. The corresponding run of $\A'$ finishes at 
a state of the form $(s', w)$ or $(s',\bot)$. Thus, $w$  is accepted by $A'$.

In the fourth step, the algorithm
checks whether $L(\A_{ss'})\setminus W$ contains a~word of odd length. It can be done in polynomial time since words of odd length form a~regular language and the intersection of this language with  $L(\A_{ss'})\setminus W$ is recognized by an NFA with $2|S'|$ states, where $S'$ is the state set of $\A'$. If the answer is `yes',  then the algorithm adds  $q'$ to $S^+ $.  The correctness of this step is guaranteed by Requirement~\ref{cond:trash}.

In the fifth step, the algorithm constructs an NFA $\A''$ that accepts exactly the words of even length from $L(\A_{ss'})\setminus W$, apply to it the algorithm of Proposition~\ref{prop:evencase}, updates $S^+ $, $S^- $ if necessary, and continues with the next state.

It is  clear from the above remarks  that at the end  $S^+\subseteq\delta^L_{\A}(s)$, $S^-\subseteq\delta^{\bar L}_{\A}(s) $.

Suppose  that $s'\in\delta^L_{\A}(s) $. If $L(\A_{ss'})$ is infinite then $s'$ is added to $S^+$ at the first step. If  $L(\A_{ss'})\cap W\cap L\cap \{0,1\}^{\leq |Q|}\ne \es$, then $s'$ is added at the second step.
Otherwise, $(L(\A_{ss'})\cap L)\setminus W$ should be non-empty. If there are words of odd length in $L(\A_{ss'})\setminus W$, then $s'$ is added at the fourth step. And, finally, if $L(\A_{ss'})\setminus W$ consists of words of even length only, $s'$ is added at the fifth step due to Proposition~\ref{prop:evencase}. Therefore, $S^+=\delta^L_{\A}(s)$ at the end of the algorithm.

Suppose  that $s'\in\delta^{\bar L}_{\A}(s)$. If  $L(\A_{ss'})$  is infinite then $s'$ is added to $S^-$ at the first step. If  $L(\A_{ss'})\cap W\cap\bar L\cap \{0,1\}^{\leq |S|}\ne \es$, then $s'$ is added at the second step.
Otherwise, $(L(\A_{ss'})\cap \bar L)\setminus W\ne\es$ and  $s'$ is added at the fifth step due to Proposition~\ref{prop:evencase}. Therefore, $S^-=\delta^{\bar L}_{\A}(s)$ at the end of the algorithm.
\end{proof}

Now we prove that Requirements~\ref{cond:sq}--\ref{cond:trash} on $L$ are compatible.

\begin{lemma}\label{lm:Lchoice}
There exists $L$ satisfying   Requirements \ref{cond:sq}--\ref{cond:trash}. 
\end{lemma}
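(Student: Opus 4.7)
The plan is to pin down $L$ by first constructing $W$ together with the values of $L$ on $W$, then extending $L$ to $\bar W$ in the unique way compatible with Requirements~\ref{cond:sq}, \ref{cond:lex}, and~\ref{cond:trash}. The guiding intuition is that the pumping lemma forces every infinite regular language $L(\A_{s_1 s_2})$ to contain an entire sequence $\{u v^k z\}_{k\ge 0}$ with $|u|,|v|,|z|$ bounded by the NFA size; I will arrange $W$ to capture all such sequences, and then colour $W$ so that both $L$ and $\bar L$ meet every such sequence.

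Concretely, set $c(n) = \lceil \log_2(n+1)\rceil$ and define
\[
W = \bigl\{w\in\{0,1\}^*\setminus\sq(\{0,1\}^*) : \exists u,v,z,k,\ w = u v^k z,\ v\ne\eps,\ |u|,|v|,|z|\le c(|w|)\bigr\}.
\]
Excluding $\sq(\{0,1\}^*)$ is essential: it prevents $L\cap W$ from inheriting the complexity of the (possibly undecidable) language $X$. Both $W\in\PP$ and the sparsity and enumerability of $W$ required by~\ref{cond:sparse} follow at once, since only $\poly(n)$ triples $(u,v,z)$ need be checked. For $w\in W$, let the canonical decomposition $D(w)=(u^*,v^*,z^*)$ be the lexicographically smallest admissible triple under a fixed total order on $\{0,1,\#\}^*$, and let $K(w) = (|w|-|u^*|-|z^*|)/|v^*|$ be the canonical exponent. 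Fix a polynomial-time function $f\colon\NN\to\{0,1\}$ whose fibres $f^{-1}(0)$ and $f^{-1}(1)$ both meet every arithmetic progression $d\NN+r$ with $d\ge 1$; the choice $f(k) = \lfloor\log_2(k+1)\rfloor\bmod 2$ works, since each fibre contains intervals of unbounded length, hence representatives of every residue class modulo any~$d$. Put $L\cap W = \{w\in W : f(K(w))=1\}$ and extend $L$ to $\bar W$ in the only way forced by the remaining requirements: $\sq(x)\in L\iff x\in X$ on $\sq(\{0,1\}^*)$; odd-length words and $xx$-form words outside $\sq(\{0,1\}^*)$ go into $L$; and for an even word $uv$ with $|u|=|v|$ and $u\ne v$ put $uv\in L\iff u\lex v$.

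Requirements~\ref{cond:sq}, \ref{cond:lex}, \ref{cond:trash}, and~\ref{cond:sparse} follow directly from this construction, while $L\cap W$ and $\bar L\cap W$ lie in~$\PP$ because $W$, the canonical decomposition, and $f$ are all polynomial-time computable. The substantive verification is Requirement~\ref{cond:infinite}. Given infinite $L(\A_{s_1 s_2})$ with $s$ states, the pumping lemma supplies $u,v,z$ with $|uvz|\le s$, $v\ne\eps$, such that $uv^kz\in L(\A_{s_1 s_2})$ for every $k\ge 0$. For $k$ sufficiently large one has $c(|uv^kz|)\ge s$, so $uv^kz$ satisfies the size constraint built into the definition of $W$. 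Two word-combinatorial claims then complete the argument: (a)~the set $\{k:uv^kz\in\sq(\{0,1\}^*)\}$ is finite, so $uv^kz\in W$ for all but finitely many $k$; and (b)~for large $k$ the canonical triple $D(uv^kz)$ stabilizes and $K(uv^kz) = \alpha k + \beta$ for integer constants $\alpha\ge 1$ and $\beta$. Combining (a), (b), and the defining property of $f$, the values $K(uv^kz)$ then run over an infinite arithmetic progression in $\NN$, on which $f$ takes both values $0$ and $1$; the two corresponding $k$'s yield the required witnesses $w_1\in L\cap W\cap L(\A_{s_1 s_2})$ and $w_2\in\bar L\cap W\cap L(\A_{s_1 s_2})$.

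The hardest parts of the proof are the two claims (a) and (b). Claim~(a) is a Fine--Wilf style observation: if $uv^kz = \beta(x_k)11\beta(x_k)11$ held for infinitely many $k$, then the word would simultaneously have period $|v|$ in its middle $v^k$-region and period $|uv^kz|/2$ overall, forcing the middle to be periodic with $\gcd(|v|,|uv^kz|/2)$; combined with the rigid alphabet constraint $\beta(x_k)\in\{01,10\}^*$ and the forced $11$-block at the halfway point, this allows only finitely many $k$. Claim~(b) is the standard stability of the lex-smallest factorization: once $k$ exceeds $O(\max(|u|,|v|,|z|))$, the admissible triples $(u^*,v^*,z^*)$ for $uv^kz$ are in bijective correspondence (independent of~$k$) with configurations consisting of a period-divisor $v^*$ of $v$ together with boundary alignments inside $uv$ and $vz$, so the lex-smallest triple is the same for all large $k$; the divisibility $|v^*|\mid|v|$ then yields the linear formula $K(uv^kz)=(|v|/|v^*|)\,k+\beta$. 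With (a) and (b) in hand, all five requirements are verified and Lemma~\ref{lm:Lchoice} is proved.
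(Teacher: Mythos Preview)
Claim~(b) is false under the natural reading of ``lexicographically smallest.'' The bound $c(|w|)$ grows with $k$, so the set of admissible triples for $uv^kz$ strictly grows; your asserted $k$-independent bijection does not exist. Concretely, take $u=z=1$, $v=0$, so $w=10^k1$, and order $\{0,1,\#\}$ by $\#<0<1$. One cannot have $u^*=\eps$, for then $v^*$ would begin with $1$ and $(v^*)^K$ could not sit inside $10^k$ with $K$ large. Hence $u^*=1$; then $1\#0\#\cdots$ beats $1\#00\cdots$, so $v^*=0$; finally, among $z^*\in\{1,01,001,\ldots,0^{c(|w|)-1}1\}$ pure lex prefers the longest run of leading zeros permitted by the bound. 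Thus $D(w)=(1,0,0^{c(|w|)-1}1)$ changes every time $c(|w|)$ increments, and $K(w)=k+1-c(k+2)$ is not of the form $\alpha k+\beta$.

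The approach is salvageable if you replace pure lex by shortlex (length first, then lex) on the encoding $u^*\#v^*\#z^*$. The pumping triple $(u,v,z)$ is itself admissible once $c(|w|)\ge\max(|u|,|v|,|z|)$, so the shortlex-minimum is eventually taken over a fixed finite set of candidates with $|u^*|+|v^*|+|z^*|\le|u|+|v|+|z|$; validity of each such candidate for $uv^kz$ depends only on a divisibility condition on $|w|-|u^*|-|z^*|$, hence the canonical triple is eventually \emph{periodic} in $k$, and on each residue class $K$ is affine in $k$---still enough for your $f$, since each fibre of $f$ meets every arithmetic progression. This is more delicate than ``the lex-smallest triple is the same for all large $k$,'' and it has to be argued. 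The paper sidesteps the whole issue by choosing $W$ differently: for each triple $(a,b,c)$ it places exactly two witnesses $ab^{2r(a,b,c)}c\in\bar L\cap W$ and $ab^{2q(a,b,c)+1}c\in L\cap W$ via an explicit induction that keeps all lengths distinct, so sparsity and Requirement~\ref{cond:infinite} are immediate without any canonical-decomposition machinery.
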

\begin{proof}
We define $W$ at first. For each triple $a,b,c$ of non-empty binary words there are two words in $W$ in the form $ab^{2r(a,b,c)}c$ and $ab^{2q(a,b,c)+1}c$ and for each $w\in W$ there exists a unique triple $a,b,c$ such that either $w = ab^{2r(a,b,c)}c$ or $w=ab^{2q(a,b,c)+1}c$. The definition of $W$ is inductive. Order all triples $x,y,z$  of non-empty binary words  with respect to  the length of $xyz$ and order the triples with the same length of $xyz$ with respect to the lexicographical order on the triples of binary words (binary words are also ordered lexicographically).

Assume that for all $(x,y,z)$ less than $(a,b,c)$ we have defined $s(x,y,z)$ and $t(x,y,z)$ properly.
Thus the set  $W'\subseteq W$ has been already defined. 
The total number of  $(x,y,z)$ less than $(a,b,c)$ does not exceed $\binom{|abc|-1}{2}\cdot(2^{|abc|}-1)$.
Thus, there are at most
$2\binom{|abc|-1}{2}\cdot(2^{|abc|}-1)$
 words from $W'$ having lengths in the range
 $[2^{3|abc|+3},2^{3|abc|+4}-1 ]$. So, there exist at least
\[
\frac{2^{3|abc|+3}}{2\binom{|abc|-1}{2}\cdot(2^{|abc|}-1)}-1 >2|abc| > 2|b|
\]
consecutive  integers $i$  in the range such that no word in  $W'$ has the length $i$. At least $|b|$ of them are even and at least $|b|$ of them are odd. It guarantees that the sets
\[
\begin{aligned}
&E= \{j: ab^{2j}c\notin W',\ 2^{3|abc|+3}\leq |ab^{2j}c|< 2^{3|abc|+4}\}
\ \text{and}\\ 
&O= \{j: ab^{2j+1}c\notin W',\ 2^{3|abc|+3}\leq |ab^{2j+1}c|< 2^{3|abc|+4}\}
\end{aligned}
\]
are non-empty.
 Set $r(a,b,c)$ be the minimal $j$ in $E$ and  $q(a,b,c)$ be the minimal $j$ in~$O$. 

To define $L$, we require that
the words from $W$ in the form $ab^{2q(a,b,c)+1}c$ are in $L$, while 
the words from $W$ in the form $ab^{2r(a,b,c)}c$ are in $\bar L$. Note
that it implies Requirement~\ref{cond:infinite}, since any infinite
regular language contains all words in the form $ab^kc$, $k>0$,  for some $a$, $b$, $c$.

By construction,  for each $w\in W $ the length of defining triple $a,b,c$ is logarithmic in the length of $w$. Thus $|W\cap\{0,1\}^{\leq n}|= \poly(n)$. To construct the list of words in $L\cap W\cap\{0,1\}^{\leq n}$ and the list of 
words in $\bar L\cap W\cap\{0,1\}^{\leq n}$ one need to perform only polynomial number of steps of the defining procedure and each step can be performed in polynomial time. Therefore, Requirement~\ref{cond:sparse} is satisfied.

 The rest of $L$ is defined to satisfy Requirements~\ref{cond:sq},~\ref{cond:lex}, and~\ref{cond:trash}. Note that $\sq(\{0,1\}^*)\cap W=\es$, since, for each $x\in\{0,1\}^*$,  $\sq(x)$ does not contain proper periodic subwords of length greater $|\sq(x)|/2$ but each word in $W$ do contain such words. It means that the construction of $W$ does not conflict with Requirement~\ref{cond:sq}.
\end{proof}

Now Theorem~\ref{th:BP-universality} follows from Lemma~\ref{lm:Lcorrectness} and Lemma~\ref{lm:Lchoice}.

\begin{proof}[Proof of Theorem~\ref{th:BP-universality}]
Choose $L$ as in the proof of Lemma~\ref{lm:Lchoice}. The reduction $X
\lelog \nreg(\PROT)$ is given by a map $x\mapsto \sq(x)\#+$. It is
clear that the map is computed in logarithmic space. The correctness of reduction follows from Requirement~\ref{cond:sq}.

Now we describe the second reduction, $ \nreg(\PROT) \leTp X$. Let $\A$ be an input NFA for $\nreg(\PROT)$. The reducing algorithm computes all sets $\delta^L_{\A}(s)$, $\delta^{\bar L}_{\A}(s)$ using Lemma~\ref{lm:Lcorrectness}.

Let $\B$ be an NFA with the same state set as $\A$. The alphabet of $B$ is $\{y, n, \#, +, -, r\}$. Transitions $\delta_{\B}(s, a)$ coincide with transitions  $\delta_{\A}(s, a)$ for $a\in\{\#, +, -, r\}$. For the rest of transitions,  $\delta_{\B}(s,y) =\delta^L_{\A}(s) $ and   $ \delta_{\B}(s,n) =\delta^{\bar L}_{\A}(s) $. The initial state and the accepting states of $\B$ and of $\A$ coincide.

Let $R= L(\B)\cap \big(y\#+ \mid n\#-\mid rr\big)^*$. Then $L(\A)\cap
\PROT\ne\es$ iff $R\ne \es$. The latter condition is verified in
polynomial time since $R$ is regular.
\end{proof}

\section*{Acknowledgments}

This work is supported by the Russian Science Foundation grant 20--11--20203.

\bibliographystyle{abbrv}
\bibliography{AuxDS-extended}

\end{document}